\newtheorem{theorem}{Theorem}[section]
\newtheorem{corollary}{Corollary}[section]
\newtheorem{proposition}{Proposition}[section]
\newtheorem{lemma}{Lemma}[section]
\newtheorem{example}{Example}[section]
\newtheorem{remark}{Remark}[section]
\newenvironment{proof}[1][Proof.]{\vspace{0.5em}\textbf{#1} }{\
\hfill$\square$\medskip}
\newcommand{\bv}{\mathbf{v}}
\newcommand{\Z}{\mathbb{Z}}
\newcommand{\zero}{{\mathbf{0}}}
\newcommand{\one}{{\mathbf{1}}}
\newcommand{\pmo}{{\mathbf{p-1}}}
\newcommand{\C}{{\cal C}}
\newcommand{\N}{{\mathbb{N}}}
\newcommand{\rank}{\text{rank}}
\newcommand{\kernel}{\text{ker}}
\newcommand{\ord}{\operatorname{ord}}
\newcommand{\cS}{{\cal S}}
\newcommand{\cA}{{\cal A}}
\begin{document}




\title{Equivalences among $\Z_{p^s}$-linear Generalized\\ Hadamard Codes
\thanks{This work has been partially supported by the Spanish MINECO under Grant PID2019-104664GB-I00
(AEI / 10.13039/501100011033) and by Catalan AGAUR scholarship 2020 FI SDUR 00475.}
}

\author{Dipak K. Bhunia, Cristina Fern\'andez-C\'ordoba,\\ Carlos Vela, Merc\`e Villanueva
\thanks{D. K. Bhunia, C. Fern\'andez-C\'ordoba, and Merc\`e Villanueva are with the Department of Information and Communications
Engineering, Universitat Aut\`{o}noma de Barcelona, 08193 Cerdanyola del Vall\`{e}s, Spain; and C. Vela is with the Department of Mathematics, University of Aveiro, 3810-197 Aveiro, Portugal.}
}

\maketitle


\begin{abstract}
The $\Z_{p^s}$-additive codes of length $n$ are subgroups of $\Z_{p^s}^n$, and can be seen as a generalization of linear
codes over $\Z_2$, $\Z_4$, or $\Z_{2^s}$ in general. A $\Z_{p^s}$-linear generalized Hadamard (GH) code is a GH code over $\Z_p$ which is the image of a $\Z_{p^s}$-additive code by a generalized Gray map. A partial classification of these codes by using the dimension of the kernel is known. In this paper,
we establish that some $\Z_{p^s}$-linear GH codes of length $p^t$ are
equivalent, once $t$ is fixed. This allows us to improve the known upper bounds for the number of such nonequivalent codes.
Moreover, up to $t=10$,  this new upper bound coincides with a known lower bound (based on the rank and dimension of the kernel). 
\end{abstract}



\section{Introduction}\label{intro}
Let $\Z_{p^s}$ be the ring of integers modulo $p^s$ with $p$ prime and $s\geq1$. The set of
$n$-tuples over $\Z_{p^s}$ is denoted by $\Z_{p^s}^n$. In this paper,
the elements of $\Z^n_{p^s}$ will also be called vectors. 
The order of a vector $\mathbf u$ over $\Z_{p^s}$, denoted by $o(\mathbf{u})$, is the smallest positive integer $m$ such that $m \mathbf{u} =\zero$.

A code over $\Z_p$ of length $n$ is a nonempty subset of $\Z_p^n$,
and it is linear if it is a subspace of $\Z_{p}^n$. Similarly, a nonempty
subset of $\Z_{p^s}^n$ is a $\Z_{p^s}$-additive if it is a subgroup of $\Z_{p^s}^n$.
Note that, when $p=2$ and $s=1$, a $\Z_{p^s}$-additive code is a binary linear code and, when $p=2$ and $s=2$ ,
it is a quaternary linear code or a linear code over $\Z_4$.


In \cite{Sole}, a Gray map  from $\Z_4$ to $\Z_2^2$ is defined as
$\phi(0)=(0,0)$, $\phi(1)=(0,1)$, $\phi(2)=(1,1)$ and $\phi(3)=(1,0)$. There exist different generalizations of this Gray map, which go from $\Z_{2^s}$ to
$\Z_2^{2^{s-1}}$ \cite{Carlet,Codes2k,dougherty,Nechaev,Krotov:2007}.
The one given in \cite{Nechaev} can be defined in terms of the elements of a Hadamard code \cite{Krotov:2007}, and Carlet's Gray map \cite{Carlet} is a particular case of the one given in \cite{Krotov:2007} 
satisfying $\sum \lambda_i \phi(2^i) =\phi(\sum \lambda_i 2^i)$ \cite{KernelZ2s}. 
In this paper, we focus on a generalization of Carlet's Gray map, from $\Z_{p^s}$ to $\Z_p^{p^{s-1}}$, which is also a particular case of the one given in \cite{ShiKrotov2019}. Specifically,
\begin{equation}\label{genGraymap}
\phi_s(u)=(u_{s-1},\dots,u_{s-1})+(u_0,\dots,u_{s-2})Y_{s-1},
\end{equation}
where $u\in\Z_{p^s}$, $[u_0,u_1, \ldots, u_{s-1}]_p$ is the p-ary expansion of $u$, that is $u=\sum_{i=0}^{s-1}p^{i}u_i$ ($u_i \in \Z_p$), and $Y_{s-1}$ is a matrix of size $(s-1)\times p^{s-1}$ whose columns are the elements of $\Z_p^{s-1}$. 
Then, we define $\Phi_s:\Z_{p^s}^ n\rightarrow\Z_p^{np^{s-1}}$ as the component-wise Gray map $\phi_s$.

Let $\C$ be a $\Z_{p^s}$-additive code of length $n$. We say that its image
$C=\Phi(\C)$ is a $\Z_{p^s}$-linear code of length $p^{s-1}n$.
Since $\C$ is a subgroup of
$\Z_{p^s}^n$, it is isomorphic to an abelian structure
$\Z_{p^s}^{t_1}\times\Z_{p^{s-1}}^{t_2}\times
\dots\times\Z_{p^2}^{t_{s-1}}\times\Z_p^{t_s}$, and we say that $\C$, or equivalently
$C=\Phi(\C)$, is of type $(n;t_1,\dots,t_{s})$.
Note that $|\C|=p^{st_1}p^{(s-1)t_2}\cdots p^{t_s}$.
Unlike linear codes over finite fields,
linear codes over rings do not have a basis, but there
exists a generator matrix for these codes having minimum number of rows, that is, $t_1+\cdots+t_s$ rows. 
A $p$-linear combination of the elements of $\mathcal{B}=\{\mathbf{b}_1,\dots,\mathbf{b}_r\}\subseteq \Z_{p^s}^n$ is $\sum_{i=1}^{r}\lambda_i\mathbf{b}_i$, for $\lambda_i\in\Z_p$. We say that $\mathcal{B}$ is a $p$-basis of $\C$ if the elements in $\mathcal{B}$ are $p$-linearly independent and any $\mathbf{c}\in\C$ is a $p$-linear combination of the elements of  $\cal{B}$.

Let $\cS_n$ be the symmetric group of permutations on the set $\{1,\dots,n\}$.
Two codes $C_1$ and $C_2$ over $\Z_p$, are said to be equivalent if there is a vector $\textbf{a}\in \Z_p^n$ and a
permutation of coordinates $\pi\in \cS_n$ such that $C_2=\{ \textbf{a}+\pi(\textbf{c}) : \textbf{c} \in C_1 \}$.
Two $\Z_{p^s}$-additive codes, $\C_1$ and $\C_2$, are said to be permutation equivalent if they differ
only by a permutation of coordinates, that is, if there is a permutation of coordinates $\pi\in \cS_n$
such that $\C_2=\{ \pi(\textbf{c}) : \textbf{c} \in \C_1 \}$.

Two structural properties of codes over $\Z_p$ are the rank and
dimension of the kernel. The rank of a code $C$ over $\Z_p$ is simply the
dimension of the linear span, $\langle C \rangle$,  of $C$.
The kernel of a code $C$ over $\Z_p$ is defined as
$\mathrm{K}(C)=\{\textbf{x}\in \Z_p^n : \textbf{x}+C=C \}$ \cite{BGH83,pKernel}. If the all-zero vector belongs to $C$,
then $\mathrm{K}(C)$ is a linear subcode of $C$.
Note also that if $C$ is linear, then $K(C)=C=\langle C \rangle$.
We denote the rank of $C$ as $\rank(C)$ and the dimension of the kernel as $\kernel(C)$.
These parameters can be used to distinguish between non-equivalent codes, since equivalent ones have the same rank and dimension of the kernel.

A generalized Hadamard $(GH)$ matrix $H(p,\lambda) = (h_{i j})$ of order $n = p\lambda$ over $\Z_p$ is a $p\lambda \times p\lambda$ matrix with entries from $\Z_p$ with the property that for every $i, j$, $1 \leq i < j \leq p\lambda,$ each of the multisets $\{h_{is}- h_{js} : 1 \leq s \leq p\lambda\}$ contains every element of $\Z_p$ exactly $\lambda$ times \cite{jungnickel1979}. 
An ordinary Hadamard matrix of order $4\mu$ corresponds to  $GH$ matrix $H(2,\lambda)$ over $\Z_2$, where $\lambda = 2\mu$ \cite{Key}. 
Two $GH$ matrices $H_1$ and $H_2$ of order $n$ are said to be equivalent if one can be obtained from the other by a permutation of the rows and columns and adding the same element of $\Z_p$ to all the coordinates in a row or in a column. We can always change the first row and column of a $GH$ matrix into zeros and we obtain an equivalent $GH$ matrix which is called normalized. From a normalized Hadamard matrix $H$, we denote by $F_H$ the code over $\Z_p$ consisting of the rows of $H$, and $C_H$ the one defined as $C_H = \bigcup_{\alpha \in\Z_p} (F_H + \alpha \textbf{1})$,
where $F_H + \alpha \textbf{1} = \{\textbf{h} + \alpha \textbf{1} : \textbf{h} \in F_H\}$ and $\textbf{1}$ denotes the
all-one vector. The code $C_H$ over $\Z_p$ is called generalized
Hadamard $(GH)$ code \cite{dougherty2015ranks}. Note that $C_H$ is generally a nonlinear code over $\Z_p$.

Let $\C$ be a $\Z_{p^s}$-additive code such that $\Phi(\C)$ is a GH code. Then, we say that $\C$ is a $\Z_{p^s}$-additive GH code and $\Phi(\C)$ is a $\Z_{p^s}$-linear GH code.
Note that a GH code over $\Z_p$ of length $N$ has $pN$ codewords and minimum distance $\frac{N(p-1)}{p}$.

The $\Z_4$-linear Hadamard codes of length $2^t$ can be classified by using either the rank or the dimension of the kernel \cite{Kro:2001:Z4_Had_Perf,PheRifVil:2006}. There are exactly $\lfloor
\frac{t-1}{2}\rfloor$ such codes for all $t\geq 2$. 
Later, in \cite{KernelZ2s}, an iterative construction for $\Z_{2^s}$-linear Hadamard codes was described, and the linearity and kernel of these codes were established. It was proved that the dimension of the kernel only provides a complete classification for some values of $t$ and $s$. A partial classification by using the kernel was obtained, and the exact amount of non-equivalent such codes was given up to $t=11$ for any $s\geq 2$. Authors also give a lower bound based on the rank and dimension of the kernel. Later, in paper \cite{EquivZ2s}, for $s\geq 2$, it was established that some $\Z_{2^s}$-linear Hadamard codes of length $2^t$ are equivalent, once $t$ is fixed, and this fact improved the known upper bounds for the number of such nonequivalent codes. Moreover, authors show that, up to $t = 11$, this new upper bound coincides with the known lower bound (based on the rank and dimension of the kernel). Finally, for $s \in \{2,3\}$, they establish the full classification of the $\Z_{2^s}$-linear Hadamard codes of length $2^t$  by giving the exact number of such codes.
For $s\geq 2$, and $p\geq 3$ prime, the dimension of the kernel for $\Z_{p^s}$-linear GH codes of length $p^t$  is established in \cite{HadamardZps}, and it is proved that this invariant only provides a complete classification for certain values of $t$ and $s$. Lower and upper bounds are also established for the number of nonequivalent $\Z_{p^s}$-linear GH codes of length $p^t$, when both $t$ and $s$ are fixed, and when just $t$ is fixed; denoted by $\mathcal{A}_{t,s,p}$ and $\mathcal{A}_{t,p}$, respectively. 
From \cite{HadamardZps}, we can check that there are nonlinear codes having the same rank and dimension of the kernel for different values of $s$, once the length $p^t$ is fixed, for all $4\leq t\leq 11$.

In this paper, we show that there are generalized Hadamard codes of length $p^t$, with $p\geq 3$ and $t\geq 1$, that can be seen up to permutations as $\Z_{p^s}$-linear codes for different values of $s$. Moreover, for $t\leq 10$, we see that the codes that are permutation equivalent are, in fact, those having the same pair of invariants, rank and dimension of the kernel. For example, in Table \ref{table:Types}, for $t=7$, the colored codes having same rank and dimension of the kernel are permutation equivalent. These equivalence results  allow us to obtain a more accurate classification of the $\Z_{p^s}$-linear GH codes, than the one given in \cite{HadamardZps}. 

The paper is organized as follows. In Section \ref{Sec:construction}, we recall the recursive construction of the $\Z_{p^s}$-linear GH codes, the known partial classification, and some bounds on the number of nonequivalent such codes, presented in \cite{HadamardZps}. In Section \ref{sec:relations}, we prove some equivalence relations among the $\Z_{p^s}$-linear GH codes of the same length $p^t$. 
Later, in Section \ref{sec:Classification}, we improve the classification given in \cite{HadamardZps} by refining the known bounds. 

\section{Preliminaries and Partial classification}
\label{Sec:construction}

In this section, we provide some results presented in \cite{HadamardZps} and related to 
the recursive construction, partial classification and
bounds on the number of nonequivalent $\Z_{p^s}$-linear GH codes of length $p^t$ with $t\geq 3$ and $p\geq 3$ prime.

Let $T_i=\lbrace j\cdot p^{i-1}\, :\, j\in\lbrace0,1,\dots,p^{s-i+1}-1\rbrace \rbrace$ for all $i \in \{1,\ldots,s \}$.
Note that $T_1=\lbrace0,\dots,p^{s}-1\rbrace$. Let $t_1$, $t_2$,\dots,$t_s$ be nonnegative integers with $t_1\geq1$. Consider the matrix $A_p^{t_1,\dots,t_s}$ whose columns are exactly all the vectors of the form $\mathbf{z}^T$, $\mathbf{z}\in\lbrace1\rbrace\times T_1^{t_1-1}\times T_{2}^{t_2}\times\cdots\times T_s^{t_s}$. We write $A^{t_1,\dots,t_s}$ instead of $A_p^{t_1,\dots,t_s}$ when the value of $p$ is clear by the context.
Let  $\mathbf{0}, \mathbf{1},\mathbf{2},\ldots, \mathbf{p^{s}-1}$ be the vectors having the same element $0, 1, 2, \ldots, p^s-1$ from $\Z_{p^s}$ in all its coordinates, respectively. 

Any matrix $A^{t_1,\dots,t_s}$ can be obtained by applying the
following recursive construction. We start with $A^{1,0,\dots,0}=(1)$. Then, if
we have a matrix $A=A^{t_1,\dots,t_s}$, for any $i\in \{1,\ldots,s\}$, we may construct the matrix
\begin{equation}\label{eq:recGenMatrix}
A_i=
\left(\begin{array}{cccc}
A & A &\cdots & A \\
0\cdot \mathbf{p^{i-1}}  & 1\cdot \mathbf{p^{i-1}} & \cdots & (p^{s-i+1}-1)\cdot \mathbf{p^{i-1}}  \\
\end{array}\right).
\end{equation}
Finally, permuting the rows of $A_i$, we obtain a matrix $A^{t'_1,\ldots,t'_s}$, where $t'_j=t_j$ for $j\not=i$ and $t'_i=t_i+1$. Note that any permutation of columns of $A_i$ gives also a matrix $A^{t_1',\dots,t_s'}$.
Along this paper, we consider that the matrices $A^{t_1,\ldots,t_s}$ are constructed recursively starting from $A^{1,0,\ldots,0}$ in the following way. First, we add $t_1-1$ rows of order $p^s$, up to obtain $A^{t_1,0,\ldots,0}$; then $t_2$ rows of order $p^{s-1}$ up to generate $A^{t_1,t_2,\ldots,0}$; and so on, until we add $t_s$ rows of order $p$ to achieve $A^{t_1,\ldots,t_s}$. See \cite{HadamardZps} for examples.

Let $\mathcal{H}^{t_1,\dots,t_s}$ be the $\Z_{p^s}$-additive code generated by the matrix $A^{t_1,\dots,t_s}$, where $t_1,\dots,t_s$ are nonnegative integers with $t_1\geq1$. Let $n=p^{t-s+1}$, where $t=\left(\sum_{i=1}^{s}(s-i+1)\cdot t_i\right)-1$. The code $\mathcal{H}^{t_1,\dots,t_s}$ has length $n$, and the corresponding $\Z_{p^s}$-linear code $H^{t_1,\dots,t_s}=\Phi(\mathcal{H}^{t_1,\dots,t_s})$ is a GH code of length $p^t$ \cite{HadamardZps}. In order to classify the $\Z_{p^s}$-linear GH codes of length $p^t$, we can focus on $t\geq 5$ and $2\leq s \leq t-2$ for $p=2$ (\cite{KernelZ2s}), and on $t\geq 4$ and $2\leq s \leq t-2$ for $p\geq 3$ prime (\cite{HadamardZps}). Moreover, for any $t\geq 5$ and $2\leq s \leq t-2$, there are two $\Z_{2^s}$-linear Hadamard codes of length $2^t$ which are linear, say  $H^{1,0,\ldots,0,t+1-s}$ and $H^{1,0\dots,0,1,t-1-s}$; and for any $t\geq 4$ and $2\leq s \leq t-2$, there is a unique $\Z_{p^s}$-linear GH code of length $p^t$ which is linear, say $H^{1,0,\ldots,0,t+1-s}$. 

Tables \ref{table:Types} and \ref{table:TypesB2}, for $4\leq t\leq 10$ and $2\leq s\leq t-2$, show all possible values
$(t_1,\dots,t_s)$  for which there exists a nonlinear $\Z_{p^s}$-linear Hadamard code $H^{t_1,\ldots,t_s}$ of length $p^t$.
For each one of them, the values $(r,k)$ are shown, where $r$ is the rank (computed by using the computer algebra system Magma \cite{Magma,Qarypackage}) and $k$ is the dimension of the kernel (\cite{HadamardZps,KernelZ2s,Kro:2001:Z4_Had_Perf}).  Note that if two codes have different values $(r,k)$, then they are not equivalent. Therefore, from the values of the dimension of the kernel given in these tables, it is easy to see that this invariant does not classify. Also, considering only the rank, it is not possible to fully classify these codes either.

\begin{table}[ht!]
\footnotesize
\centering
\begin{tabular}{|c||cc|cc|cc|cc|}
\cline{1-9}
& \multicolumn{2}{c|}{$t=4$}& \multicolumn{2}{c|}{$t=5$}& \multicolumn{2}{c|}{$t=6$}& \multicolumn{2}{c|}{$t=7$}\\
\cline{2-9}
& $(t_1,\ldots,t_s)$ & $(r,k)$ & $(t_1,\ldots,t_s)$ & $(r,k)$ & $(t_1,\ldots,t_s)$ & $(r,k)$ & $(t_1,\ldots,t_s)$ & $(r,k)$\\[0.2em]
\hline
\multirow{2}{*}{$\Z_9$} &$(2,1)$  & (6,3)  &$(2,2)$  &(7,4)    & $(3,1)$  & (12,4)  & $(3,2)$  & (13,5)  \\
                        &&       & $(3,0)$  & (11,3)             &$(2,3)$  &(8,5)                & $(4,0)$  & (21,4)\\
                        &&    &&    &&  &$\textcolor{blue}{(2,4)}$ & \textcolor{blue}{(9,6)}  \\[0.2em]
\hline

\multirow{5}{*}{$\Z_{3^3}$} &$(1,1,0)$ & (6,3)    & $(2,0,0)$ & (13,2) & $(1,2,0)$ & (12,4)  & $(1,2,1)$ & (13,5) \\
  &&                    &$(1,1,1)$           &(7,4)       & $(2,0,1)$ & (14,3)      & $(2,0,2)$ & (15,4)  \\
  &&                    &           &       & $(1,1,2)$          &(8,5)             &  $(2,1,0)$ & (25,3)  \\
   &&                   &           &       &           &              &$\textcolor{blue}{(1,1,3)}$          &\textcolor{blue}{(9,6)}    \\[0.2em]  
\hline

\multirow{4}{*}{$\Z_{3^4}$}  && &$(1,0,1,0)$ & (7,4)       & $(1,1,0,0)$ & (14,3)  & $(1,0,2,0)$ & (13,5)   \\
    &&                       &             &             & $(1,0,1,1)$ &(8,5)    & $(1,1,0,1)$ & (15,4)  \\
    &&                       &             &       &             &        & $(2,0,0,0)$ & (14,2) \\
    &&                       &             &       &             &        &$\textcolor{blue}{(1,0,1,2)}$ &\textcolor{blue}{(9,6)}        \\[0.2em]

\hline
\multirow{3}{*}{$\Z_{3^5}$} && &             &       &$(1,0,0,1,0)$ &(8,5)        & $(1,0,1,0,0)$ & (15,4) \\
    &&                       &             &       &             &        & $\textcolor{blue}{(1,0,0,1,1)}$ &\textcolor{blue}{(9,6)}        \\[0.2em]
    
\hline
\multirow{1}{*}{$\Z_{3^6}$} &&  &             &       &             &        &$\textcolor{blue}{(1,0,0,0,1,0)}$               &\textcolor{blue}{(9,6)}      \\[0.2em]
\hline

\end{tabular}
\caption{Rank and kernel for all nonlinear $\Z_{3^s}$-linear GH codes of 
length $3^t$.}
\label{table:Types}
\end{table}

Let $X_{t,s,p}$ be the number of nonnegative integer solutions $(t_1,\ldots,t_s)\in \N^s$ of the equation $t=\left(\sum_{i=1}^{s}(s-i+1)\cdot t_i\right)-1$ with $t_1\geq 1$. This gives the number of $(t_1,\ldots,t_s)$ so that $H^{t_1,\ldots,t_s}$ is a $\Z_{p^s}$-linear GH codes of length $p^t$.
Let $\cA_{t,s,p}$ be the number of nonequivalent $\Z_{p^s}$-linear GH codes of length $p^t$ and a fixed $s\geq 2$. Then, for any $t \geq 5$ and $2 \leq s \leq t-1$, we have that $\cA_{t,s,2} \leq X_{t,s,2}-1$, since there are exactly two codes which are linear \cite{KernelZ2s}. For $p\geq 3$ prime, $t \geq 4$ and $2 \leq s \leq t-1$, we have that $\cA_{t,s,p} \leq X_{t,s,p}$, since there are exactly one code which is linear.
Moreover, for  $t \leq 10$ if $p\geq 3$ prime and for $t \leq 11$ if $p=2$ , this bound is tight. It is still an open problem to know whether this bound is tight for $t\geq 11$ if $p\geq 3$ prime and $t\geq 12$ if $p=2$.

A partial classification for the $\Z_{p^s}$-linear GH codes of length $p^t$ is given in \cite{HadamardZps} for $p\geq 3$ prime, and for $p=2$, it is given in \cite{KernelZ2s}.
Specifically, lower and upper bounds on the number of nonequivalent such codes, once only $t$ is fixed, are established. We first consider the case $p=2$. 

\begin{theorem}\label{theo:CardinalFirstBoundY} \cite{KernelZ2s}
Let $\cA_{t,2}$ be the number of nonequivalent $\Z_{2^s}$-linear Hadamard codes of length $2^t$ with $t\geq 3$. Then,
\begin{equation}\label{eq:CardinalFirstBound}
\cA_{t,2} \leq 1+ \sum_{s=2}^{t-2} (X_{t,s,2}-2)
\end{equation}
and
\begin{equation}\label{eq:CardinalFirstBoundY1}
\cA_{t,2} \leq 1+\sum_{s=2}^{t-2} (\cA_{t,s,2}-1).
\end{equation}
\end{theorem}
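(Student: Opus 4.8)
The statement to prove is Theorem~\ref{theo:CardinalFirstBoundY}, which gives two upper bounds on $\cA_{t,2}$, the number of nonequivalent $\Z_{2^s}$-linear Hadamard codes of length $2^t$ when only $t$ is fixed (so $s$ is allowed to vary). This is attributed to \cite{KernelZ2s}, so the proof is essentially a bookkeeping argument assembling facts already recalled in the excerpt.

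\textbf{Plan.} The key observation is that every $\Z_{2^s}$-linear Hadamard code of length $2^t$ arises as $H^{t_1,\ldots,t_s}$ for some solution $(t_1,\ldots,t_s)$ of $t=\left(\sum_{i=1}^s (s-i+1)t_i\right)-1$ with $t_1\geq 1$, and for a fixed $s$ the total number of such parameter tuples is $X_{t,s,2}$. The plan is to partition the family of all these codes (over all admissible $s$) according to $s$, bound the contribution of each value of $s$ separately, and add up, taking care to avoid double-counting the codes that are common to several values of $s$.

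\textbf{Key steps.} First I would recall that we may restrict to $2\leq s\leq t-2$: for $s=1$ we get the (unique, linear) binary Hadamard code, and for $s\geq t-1$ one checks (as stated in the excerpt and in \cite{KernelZ2s}) that no \emph{new} codes appear — the relevant tuples collapse to codes already counted. Second, for each fixed $s$ in this range, there are exactly two tuples, namely $(1,0,\ldots,0,t+1-s)$ and $(1,0,\ldots,0,1,t-1-s)$, whose associated code is linear; all linear Hadamard codes of a given length are equivalent, so across all $s$ these contribute a single equivalence class. Hence from the $X_{t,s,2}$ tuples for a given $s$ we discard the $2$ linear ones, obtaining at most $X_{t,s,2}-2$ nonlinear classes from that $s$; summing over $s=2,\ldots,t-2$ and adding the single linear class gives $\cA_{t,2}\leq 1+\sum_{s=2}^{t-2}(X_{t,s,2}-2)$, which is \eqref{eq:CardinalFirstBound}. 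For the second bound \eqref{eq:CardinalFirstBoundY1}, instead of bounding the number of nonlinear classes contributed by a fixed $s$ crudely by $X_{t,s,2}-2$, use the sharper count $\cA_{t,s,2}-1$: by definition $\cA_{t,s,2}$ is the number of nonequivalent codes for that fixed $s$, exactly one of which is linear, so at most $\cA_{t,s,2}-1$ nonlinear classes come from $s$; again sum and add $1$ for the linear class. (Note \eqref{eq:CardinalFirstBoundY1} is genuinely stronger since $\cA_{t,s,2}\leq X_{t,s,2}-1 < X_{t,s,2}$.)

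\textbf{Main obstacle.} The only subtle point — and the thing one must argue carefully rather than wave at — is that a single equivalence class may be realized as $H^{t_1,\ldots,t_s}$ for several distinct values of $s$, so that naively summing the per-$s$ counts overcounts. The bounds as stated are \emph{upper} bounds, so this overcounting is harmless for correctness (it only makes the bound weaker), and the proof merely needs: (i) every code is counted for at least one $s$, and (ii) exactly one equivalence class, the linear one, is guaranteed common to all $s$ and is therefore pulled out once at the front. Everything else is the elementary combinatorics of counting solutions of the linear Diophantine equation, already packaged in $X_{t,s,2}$, together with the structural facts about which $H^{t_1,\ldots,t_s}$ are linear, all of which are quoted from \cite{KernelZ2s}.
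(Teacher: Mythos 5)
Your argument is correct and is essentially the intended one: the paper states this theorem as a quoted result from \cite{KernelZ2s} without reproducing a proof, and the standard argument there is exactly your bookkeeping --- partition the codes $H^{t_1,\ldots,t_s}$ by $s\in\{2,\ldots,t-2\}$, remove the two linear tuples $(1,0,\ldots,0,t+1-s)$ and $(1,0,\ldots,0,1,t-1-s)$ (respectively the one linear equivalence class counted in $\cA_{t,s,2}$) from each summand, and add back a single term for the unique linear class, with any cross-$s$ coincidences only making the upper bound weaker. You also correctly flag the two facts that must be imported from \cite{KernelZ2s}: that one may restrict to $2\leq s\leq t-2$, and that exactly two tuples per $s$ yield linear codes (stated there for $t\geq 5$, with $t=3,4$ handled as small cases).
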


For $p=2$, the upper bound of the value of $\cA_{t,2}$ is improved in \cite{EquivZ2s} where it was proved that for a fixed $t$ there are $\Z_{2^s}$-linear Hadamard codes of length $2^t$ that are equivalent. Let $\tilde{X}_{t,s,2}=|\{ (t_1,\ldots,t_s)\in \N^s :  t+1=\sum_{i=1}^{s}(s-i+1) t_i, \ t_1\geq 2 \}|$ for $s\in\{3,\dots,\lfloor (t+1)/2\rfloor\}$ and $\tilde{X}_{t,2,2}=|\{ (t_1,t_2)\in \N^2 :  t+1=2t_1+t_2, \ t_1\geq 3 \}|$. 

\begin{theorem}[\cite{EquivZ2s}] \label{teo:numNonEquiv100}
For $t \geq 3$,
\begin{equation}\label{eq:BoundAtNewX}
\cA_{t,2} \leq1+ \sum_{s=2}^{\lfloor\frac{t+1}{2}\rfloor} \tilde{X}_{t,s,2}
\end{equation}
and
\begin{equation}\label{eq:BoundAtNew}
\cA_{t,2} \leq1+ \sum_{s=2}^{\lfloor\frac{t+1}{2}\rfloor} ({\cA}_{t,s,2}-1).
\end{equation}
Moreover, for $3 \leq t \leq 11$, first bound is tight.
\end{theorem}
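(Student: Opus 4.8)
The plan is to upgrade the bounds of Theorem~\ref{theo:CardinalFirstBoundY} by replacing the naive count $\sum_{s=2}^{t-2}(X_{t,s,2}-2)$ with a count that only runs over a reduced range of $s$ and only counts those parameter vectors $(t_1,\dots,t_s)$ with $t_1\geq 2$ (respectively $t_1\geq 3$ when $s=2$). The underlying mechanism is a collection of permutation-equivalence relations among the codes $H^{t_1,\dots,t_s}$ that must be proved first in Section~\ref{sec:relations}: concretely, one expects a relation of the form $H^{1,t_2,t_3,\dots,t_s}\equiv H^{t_2+1,t_3,\dots,t_s}$ (increasing $t_1$ by one, dropping a level of $s$) valid whenever $t_1=1$, together with its iterated consequences. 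Granting such relations, every code with $t_1=1$ coincides (up to permutation, hence up to equivalence over $\Z_2$) with a code realized at a strictly smaller value of $s$; and every code with $t_1=1$ and $t_2=0$, i.e.\ the "tail'' cases, collapses even further. Iterating, each equivalence class of $\Z_{2^s}$-linear Hadamard codes of length $2^t$ has a representative with $t_1\geq 2$ (and $t_1\geq 3$ if $s=2$), and also with $s\leq \lfloor (t+1)/2\rfloor$, since $t+1=\sum_{i=1}^s(s-i+1)t_i\geq s\cdot t_1+ \sum_{i\geq 2}(s-i+1)t_i\geq 2s$ forces $s\leq\lfloor(t+1)/2\rfloor$ once $t_1\geq 2$.

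The first inequality~\eqref{eq:BoundAtNewX} then follows by counting: the "$1$'' accounts for the unique linear code, and $\sum_{s=2}^{\lfloor (t+1)/2\rfloor}\tilde X_{t,s,2}$ overcounts (from above) the nonlinear classes, since every nonlinear class has a representative $H^{t_1,\dots,t_s}$ with $s$ in this range and $t_1$ in the indicated range, and $\tilde X_{t,s,2}$ is precisely the number of such parameter vectors at each $s$. (One should double-check that the two linear codes $H^{1,0,\dots,0,t+1-s}$ and $H^{1,0,\dots,0,1,t-1-s}$ are indeed among the collapsed ones and that they both map to the single code counted by the leading $1$; this is where the $t_1=1$ relations are used on the linear side.) The second inequality~\eqref{eq:BoundAtNew} is then immediate: $\tilde X_{t,s,2}$ is an upper bound for the number of distinct classes $\cA_{t,s,2}$ restricted to $t_1\geq 2$, which in turn is at most $\cA_{t,s,2}-1$ because the one linear code at each level has $t_1=1$ and is excluded; summing gives the claim.

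For the tightness statement for $3\leq t\leq 11$, the strategy is to compare the new upper bound against the known lower bound coming from the rank--kernel pair: by the discussion around Table~\ref{table:Types} (and its binary analogue, Table~\ref{table:TypesB2}), for these small $t$ the number of distinct pairs $(r,k)$ occurring among all $\Z_{2^s}$-linear Hadamard codes of length $2^t$ (for all admissible $s$) already equals $1+\sum_{s=2}^{\lfloor(t+1)/2\rfloor}\tilde X_{t,s,2}$. Since inequivalent codes are separated by $(r,k)$, this lower bound meets the upper bound, forcing equality; this is essentially a finite check, carried out with the tabulated data and Magma-computed ranks.

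The main obstacle is establishing the equivalence relation $H^{1,t_2,\dots,t_s}\equiv H^{t_2+1,t_3,\dots,t_s}$ (and, more delicately, identifying exactly which pairs among all the $H^{t_1,\dots,t_s}$ are permutation equivalent, not merely equivalent over $\Z_2$). The natural approach is to exhibit an explicit column permutation of the generator matrices $A^{1,t_2,\dots,t_s}$ and $A^{t_2+1,t_3,\dots,t_s}$, using the recursive structure~\eqref{eq:recGenMatrix}: a row of order $p^s$ whose first "block coordinate'' is the constant $1$ behaves, after the Gray map, like a row of order one less, because multiplying the all-ones row by $p^{i-1}$ interacts with $\phi_s$ in a controlled way (this is exactly the phenomenon exploited in \cite{EquivZ2s}). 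One must track carefully how $\Phi$ transports a coordinate permutation of $\mathcal H^{t_1,\dots,t_s}$ to a coordinate permutation of $H^{t_1,\dots,t_s}$ (it does, since $\Phi$ acts componentwise), and verify that the constant vector $\mathbf a$ absorbed in the definition of equivalence over $\Z_p$ is not actually needed here, so that the relation is in fact a permutation equivalence. The combinatorics of matching up the column multisets $\{1\}\times T_1^{t_1-1}\times\cdots\times T_s^{t_s}$ under the recursion is the technical heart of the argument.
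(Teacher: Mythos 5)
Your proposal follows essentially the same route as the source: this theorem is quoted from \cite{EquivZ2s} rather than reproved here, but the paper proves the exact analogue for $p\ge 3$ (Theorem \ref{teo:numNonEquiv100-p}) by precisely the strategy you describe --- establish the chain equivalences (Theorem \ref{theo:equi}), reduce every equivalence class to a representative with $t_1\ge 2$ and $s\le\lfloor(t+1)/2\rfloor$ (Corollaries \ref{coro:t1} and \ref{cor:sin}), add $1$ for the linear code(s), and verify tightness against the $(r,k)$ lower bound by a finite tabulated check. One small correction: the reduction relation you write, $H^{1,t_2,\dots,t_s}\equiv H^{t_2+1,t_3,\dots,t_s}$, does not preserve the length (the two sides satisfy $t+1=s+\sum_{i\ge 2}(s-i+1)t_i$ and $t+1=(s-1)+\sum_{i\ge 2}(s-i+1)t_i$ respectively); the correct relation, obtained by reading Lemma \ref{lemma:basisEquivcodes} backwards, is $H^{1,t_2,\dots,t_s}\equiv H^{t_2+1,t_3,\dots,t_{s-1},t_s+1}$, i.e.\ the last parameter must also increase by one when a level of $s$ is dropped. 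With that fix, your counting argument, the handling of the two linear codes via the $t_1\ge 3$ condition at $s=2$, and the inequality $t+1\ge 2s$ forcing $s\le\lfloor(t+1)/2\rfloor$ once $t_1\ge 2$ all go through exactly as in the paper.
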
  
It is also proved that, for $5\leq t \leq 11$, the improved upper bound is the exact value of $\cA_{t,2}$. Now we consider the case $p\geq 3$.

\begin{theorem}\label{theo:CardinalFirstBound} \cite{HadamardZps}
Let $\cA_{t,p}$ be the number of nonequivalent $\Z_{p^s}$-linear GH codes of length $p^t$ with $t\geq 3$ and $p\geq 3$ prime. Then,
\begin{equation}\label{eq:CardinalFirstBound-p}
\cA_{t,p} \leq 1+ \sum_{s=2}^{t-1} (X_{t,s,p}-1)
\end{equation}
and
\begin{equation}\label{eq:CardinalFirstBoundX-p}
\cA_{t,p} \leq 1+\sum_{s=2}^{t-1} (\cA_{t,s,p}-1).
\end{equation}
\end{theorem}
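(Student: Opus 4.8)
The plan is to prove \eqref{eq:CardinalFirstBoundX-p} first and then obtain \eqref{eq:CardinalFirstBound-p} from it. The starting point is that, by the recursive construction and \cite{HadamardZps}, every $\Z_{p^s}$-linear GH code of length $p^t$ is permutation equivalent to some $H^{t_1,\dots,t_s}$ with $(t_1,\dots,t_s)\in\N^s$, $t_1\ge 1$ and $t=\bigl(\sum_{i=1}^{s}(s-i+1)t_i\bigr)-1$; since $t+1=\sum_{i=1}^{s}(s-i+1)t_i\ge s\,t_1\ge s$, such a parameter vector exists only for $1\le s\le t+1$. Hence the set of all equivalence classes of $\Z_{p^s}$-linear GH codes of length $p^t$ is $\bigcup_{s=1}^{t+1}\mathcal{F}_s$, where $\mathcal{F}_s$ is the set of classes at level $s$, so that $\mathcal{F}_1$ is the (singleton) set of classes of linear GH codes of length $p^t$ and $|\mathcal{F}_s|=\cA_{t,s,p}$ for $s\ge 2$.

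The key observation is that the (up to permutation unique) linear GH code $\mathbf{L}$ of length $p^t$ occurs in \emph{every} family $\mathcal{F}_s$ and is the only linear code in each: by \cite{HadamardZps}, for $2\le s\le t-1$ the code $H^{1,0,\dots,0,t+1-s}$ is linear, is permutation equivalent to $\mathbf{L}$, and is the unique linear code of $\mathcal{F}_s$. I would then dispose of the three remaining levels. For $s=1$ the family is $\{[\mathbf{L}]\}$ by definition. For $s=t$, the equation $\sum_{i=1}^{s}(s-i+1)t_i=t+1$ with $t_1\ge 1$ forces $t_1=1$ (since $t_1\ge 2$ would give a left-hand side at least $2t>t+1$), and since for $t\ge 3$ every remaining coefficient except that of $t_s$ is at least $2$, the only solution is $(1,0,\dots,0,1)$, whose code is again $\mathbf{L}$. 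For $s=t+1$, from $t+1=\sum_{i=1}^{s}(s-i+1)t_i\ge(t+1)t_1$ we get $t_1=1$ and $t_i=0$ for $i\ge 2$, i.e.\ $(1,0,\dots,0)$, once more $\mathbf{L}$. Thus $\mathcal{F}_1$, $\mathcal{F}_t$, $\mathcal{F}_{t+1}$ contribute only the class $[\mathbf{L}]$.

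Combining these facts, every equivalence class of a $\Z_{p^s}$-linear GH code of length $p^t$ equals $[\mathbf{L}]$ or lies in $\mathcal{F}_s\setminus\{[\mathbf{L}]\}$ for some $s\in\{2,\dots,t-1\}$, whence
\[
\cA_{t,p}\ \le\ 1+\sum_{s=2}^{t-1}\bigl(|\mathcal{F}_s|-1\bigr)\ =\ 1+\sum_{s=2}^{t-1}\bigl(\cA_{t,s,p}-1\bigr),
\]
which is \eqref{eq:CardinalFirstBoundX-p}. Finally, since each class of $\mathcal{F}_s$ is represented by at least one of the $X_{t,s,p}$ parameter vectors in its definition, $\cA_{t,s,p}\le X_{t,s,p}$; substituting $\cA_{t,s,p}-1\le X_{t,s,p}-1$ above gives \eqref{eq:CardinalFirstBound-p}.

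I expect the main obstacle to be the linearity input used above: that every family $\mathcal{F}_s$ (for $1\le s\le t+1$) contains, up to permutation, exactly one linear code, and that this code is the same $\mathbf{L}$ for all $s$, so that it is counted only once in the union. This is where the explicit formula for $\kernel(H^{t_1,\dots,t_s})$ from \cite{HadamardZps} does the work: one verifies that $\kernel(H^{t_1,\dots,t_s})$ reaches the full value $\log_p|\mathcal{H}^{t_1,\dots,t_s}|$ precisely when $(t_1,\dots,t_s)=(1,0,\dots,0,t+1-s)$, and then identifies $\Phi(\mathcal{H}^{1,0,\dots,0,t+1-s})$ with the standard linear GH code of length $p^t$. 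The remaining ingredients — the range $1\le s\le t+1$, the trivial levels $s\in\{1,t,t+1\}$, the estimate $\cA_{t,s,p}\le X_{t,s,p}$, and the union bound — are elementary.
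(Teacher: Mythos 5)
The paper does not prove this theorem itself --- it is quoted from \cite{HadamardZps} --- so there is no in-paper proof to compare against; your argument is the standard one behind that result (and its $p=2$ analogue in \cite{KernelZ2s}): stratify the equivalence classes by $s$, observe that each level $2\le s\le t-1$ contains the common linear class $H^{1,0,\dots,0,t+1-s}$ while the levels $s\in\{1,t,t+1\}$ contribute only that class, and apply the union bound, then use $\cA_{t,s,p}\le X_{t,s,p}$. Your reasoning is correct; note only that uniqueness of the linear code at each level is not actually needed for the bound (membership of $[\mathbf{L}]$ in each $\mathcal{F}_s$ suffices), and that the paper's stated uniqueness range is $2\le s\le t-2$, so your extension of that claim to $s=t-1$ is an extra (true but unneeded) assertion.
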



In this paper, in order to improve this partial classification,
we analyze the equivalence relations among the $\Z_{p^s}$-linear GH codes with the same length $p^t$ and
different values of $s$. We prove that some of them are indeed permutation equivalent.
For $5\leq t\leq 11$ if $p=2$ and  for $4\leq t\leq 10$ if $p\geq 3$ prime, the ones that are permutation equivalent coincide with the ones that have the same invariants, rank and dimension of the kernel,
that is, the same pair $(r,k)$ in the Tables. 
Finally, by using this equivalence relations,  we improve the upper bounds on the number  $\cA_{t,p}$ of nonequivalent
$\Z_{p^s}$-linear GH codes of length $p^t$ with $p\geq 3$ prime given by Theorem \ref{theo:CardinalFirstBound}.
This allows us to determine the exact value of $\cA_{t,p}$ for $5\leq t\leq 11$ if $p=2$ and for $4\leq t \leq 10$ if $p\geq 3$ prime, since one of the new upper bounds coincides with the lower bound $(r,k)$ for these cases.

\begin{table}[H]
\footnotesize
\centering
\begin{tabular}{|c||cc|cc|cc|}
\cline{1-7}
& \multicolumn{2}{c|}{$t=8$}& \multicolumn{2}{c|}{$t=9$}& \multicolumn{2}{c|}{$t=10$}\\
\cline{2-7}
& $(t_1,\ldots,t_s)$ & $(r,k)$ & $(t_1,\ldots,t_s)$ & $(r,k)$ & $(t_1,\ldots,t_s)$ & $(r,k)$ \\[0.2em]
\hline
\multirow{4}{*}{$\Z_9$} & $(3,3)$  & (14,6) & $(3,4)$ & (15,7)  & $(3,5)$ & (16,8)   \\
                        & $(4,1)$  & (22,5) & $(4,2)$ & (23,6)  & $(4,3)$ & (24,7)   \\
                        &$(2,5)$   & (10,7)    & $(5,0)$ & (36,5)  & $(5,1)$ & (37,6)  \\[0.2em]
                        &          &        &$(2,6)$  & (11,8)     & $(2,7)$ & (12,9)   \\[0.2em]
\hline
\multirow{10}{*}{$\Z_{3^3}$}& $(1,2,2)$ & (14,6) & $(1,2,3)$ & (15,7) & $(1,2,4)$ & (16,8)   \\
                    & $(1,3,0)$ & (22,5)    & $(1,3,1)$ & (23,6) & $(1,3,2)$ & (24,7)  \\
                     & $(2,0,3)$ & (16,5)   & $(2,0,4)$ & (17,6) & $(1,4,0)$ & (37,6)  \\
                     & $(2,1,1)$ & (26,4)   & $(2,1,2)$ & (27,5) & $(2,0,5)$ & (18,7)  \\
                     &$(3,0,0)$ & (48,3)   & $(2,2,0)$ & (43,4) & $(2,1,3)$ & (28,6)  \\
                     &$(1,1,4)$ & (10,7)   & $(3,0,1)$ & (49,4) & $(2,2,1)$ & (44,5)  \\
                     &&   & $(1,1,5)$ & (11,8)    & $(3,0,2)$ & (50,5)  \\
                     &&   &           &        & $(3,1,0)$ & (82,4)  \\
                     &&   &           &        & $(1,1,6)$ & (12,9)     \\[0.2em]
\hline
\multirow{13}{*}{$\Z_{3^4}$}& $(1,0,2,1)$ & (14,6)  & $(1,0,2,2)$ & (15,7) & $(1,0,2,3)$ & (16,8)   \\
                          & $(1,1,0,2)$ & (16,5) & $(1,0,3,0)$ & (23,6) & $(1,0,3,1)$ & (24,7)   \\
                          & $(1,1,1,0)$ & (26,4) & $(1,2,0,0)$ & (49,4) & $(1,1,0,4)$ & (18,7)  \\
                          & $(2,0,0,1)$ & (35,3) & $(1,1,0,3)$ & (17,6) & $(1,1,1,2)$ & (28,6)  \\
                          &$(1,0,1,3)$&(10,7) & $(1,1,1,1)$ & (27,5) & $(1,1,2,0)$ & (44,5)  \\
                          && & $(2,0,0,2)$ & (36,4) & $(1,2,0,1)$ & (50,5)  \\
                          && & $(2,0,1,0)$ & (64,3) & $(2,0,0,3)$ & (37,5)  \\
                          && & $(1,0,1,4)$  & (11,8)       & $(2,0,1,1)$ & (65,4)  \\
                          && &             &        & $(2,1,0,0)$ & (121,3)  \\
                          && &             &        & $(1,0,1,5)$  & (12,9)        \\[0.2em]
\hline
\multirow{11}{*}{$\Z_{3^5}$}& $(1,0,0,2,0)$ & (14,6)  & $(1,0,0,2,1)$ & (15,7) & $(1,0,0,2,2)$ & (16,8)  \\
                          & $(1,0,1,0,1)$ & (16,5) & $(1,0,1,0,2)$ & (17,6) & $(1,0,0,3,0)$ & (24,7) \\
                          & $(1,1,0,0,0)$ & (35,3) & $(1,0,1,1,0)$ & (27,5) & $(1,0,1,0,3)$ & (18,7)  \\
                          &$(1,0,0,1,2)$ &(10,7) & $(1,1,0,0,1)$ & (36,4) & $(1,0,1,1,1)$ & (28,6)  \\
                          && & $(2,0,0,0,0)$ & (96,2) & $(1,0,2,0,0)$ & (50,5)  \\
                          && & $(1,0,0,1,3)$ & (11,8)                  & $(1,1,0,0,2)$ & (37,5)  \\
                          && &               &        & $(1,1,0,1,0)$ & (65,4)  \\
                          && &               &        & $(2,0,0,0,1)$ & (97,3)  \\
                          && &               &        & $(1,0,0,1,4)$ & (12,9)  \\[0.2em]
\hline
\multirow{9}{*}{$\Z_{3^6}$} & $(1,0,0,1,0,0)$ & (16,5) & $(1,0,0,0,2,0)$ & (15,7)  & $(1,0,0,0,2,1)$ & (16,8)  \\
                          &$(1,0,0,0,1,1)$ &(10,7) & $(1,0,0,1,0,1)$ & (17,6)  & $(1,0,0,1,0,2)$ & (18,7) \\
                          && & $(1,0,1,0,0,0)$ & (36,4)  & $(1,0,0,1,1,0)$ & (28,6)  \\
                          && & $(1,0,0,0,1,2)$ & (11,8)     & $(1,0,1,0,0,1)$ & (37,5)  \\
                          && &                 &         & $(1,1,0,0,0,0)$ & (97,3) \\
                          && &                 &         & $(1,0,0,0,1,3)$ & (12,9) \\[0.2em]
\hline
\multirow{5}{*}{$\Z_{3^7}$}&$(1,0,0,0,0,1,0)$ &(10,7)& $(1,0,0,0,1,0,0)$ & (17,6)  & $(1,0,0,0,0,2,0)$ & (16,8)  \\
                          && & $(1,0,0,0,0,1,1)$ & (11,8)         & $(1,0,0,0,1,0,1)$ & (18,7) \\
                          && &                   &         & $(1,0,0,1,0,0,0)$ & (37,5) \\
                          && &                   &         & $(1,0,0,0,0,1,2)$ &(12,9)         \\[0.2em]
\hline
\multirow{3}{*}{$\Z_{3^8}$}&& & $(1,0,0,0,0,0,1,0)$ &(11,8)         & $(1,0,0,0,0,1,0,0)$ & (18,7) \\
                           && &                 &         &  $(1,0,0,0,0,0,1,1)$ &(12,9)         \\ [0.2em]
                        
\hline
\multirow{1}{*}{$\Z_{3^9}$}&& & & &$(1,0,0,0,0,0,0,1,0)$ &(12,9)\\[0.2em]
\hline
\end{tabular}
\caption{Rank and kernel for all nonlinear $\Z_{3^s}$-linear GH codes of 
length $3^t$.}
\label{table:TypesB2}
\end{table}

\section{Equivalent $\Z_{p^s}$-linear GH codes}
\label{sec:relations}

In this section, we give some properties of the generalized Gray map $\phi_s$.
We also prove that, for $p\geq 3$ prime, some of the $\Z_{p^s}$-linear GH codes of the same length $p^t$,
having different values of $s$ are permutation equivalent.
Moreover, we see that they coincide with the ones having the same rank and dimension of the kernel for $4\leq t\leq 10$.

\begin{lemma}\label{lema4}
Let $\lambda_i\in \Z_p$, $i\in\lbrace0,\dots,s-2\rbrace$. Then,
$$
\sum_{i=0}^{s-2}\lambda_i\phi_s(p^ i)=\phi_s(\sum_{i=0}^{s-2}\lambda_ip^i).
$$
\end{lemma}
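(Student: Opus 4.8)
The plan is to unwind the definition~\eqref{genGraymap} of $\phi_s$ for the specific arguments $p^i$ with $0\le i\le s-2$, and for a general sum $u=\sum_{i=0}^{s-2}\lambda_i p^i$, and then compare the two sides coordinatewise. First I would observe that for $u=p^i$ with $i\le s-2$, the $p$-ary expansion is $[u_0,\dots,u_{s-1}]_p$ with $u_i=1$ and $u_j=0$ for $j\ne i$; in particular the top digit $u_{s-1}=0$. Hence, by~\eqref{genGraymap}, $\phi_s(p^i)=(0,\dots,0)+(u_0,\dots,u_{s-2})Y_{s-1}$, i.e.\ $\phi_s(p^i)$ is exactly the $i$-th row of the matrix $Y_{s-1}$ (the row indexed by the coordinate that carries the digit of $p^i$). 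So $\sum_{i=0}^{s-2}\lambda_i\phi_s(p^i)=(\lambda_0,\lambda_1,\dots,\lambda_{s-2})Y_{s-1}$, the vector of linear combinations of the rows of $Y_{s-1}$ with coefficients $\lambda_i$, computed in $\Z_p$.

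Next I would compute the right-hand side. Set $v=\sum_{i=0}^{s-2}\lambda_i p^i$; note $0\le v\le p^{s-1}-1<p^s$, so $v\in\Z_{p^s}$ is well defined, and crucially its $p$-ary expansion is $[v_0,\dots,v_{s-1}]_p$ with $v_i=\lambda_i$ for $0\le i\le s-2$ and $v_{s-1}=0$, because $v<p^{s-1}$. The key subtlety to flag is that reduction of $v$ modulo $p^s$ is harmless here (all the $\lambda_i$ are already in $\Z_p$ and $v<p^{s-1}$), so no carries occur and the digits of $v$ are literally $\lambda_0,\dots,\lambda_{s-2},0$. Then~\eqref{genGraymap} gives $\phi_s(v)=(v_{s-1},\dots,v_{s-1})+(v_0,\dots,v_{s-2})Y_{s-1}=(0,\dots,0)+(\lambda_0,\dots,\lambda_{s-2})Y_{s-1}=(\lambda_0,\dots,\lambda_{s-2})Y_{s-1}$, which is identical to the expression obtained for the left-hand side. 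This proves the claimed equality.

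The argument is essentially a direct verification, so there is no deep obstacle; the only point requiring care is the bookkeeping of $p$-ary expansions, namely confirming that the top digit vanishes for each $p^i$ with $i\le s-2$ and for the sum $v$, and that the linear part $(u_0,\dots,u_{s-2})Y_{s-1}$ is additive in the digit vector over $\Z_p$ (which it is, since it is just left multiplication by the fixed matrix $Y_{s-1}$). I would write the proof as the two displayed computations above with a sentence justifying the digit claims, and conclude by matching the two results. If desired, one can phrase it even more compactly by noting that on the subset $\{0,1,\dots,p^{s-1}-1\}$ of $\Z_{p^s}$ the map $\phi_s$ restricted to the ``low part'' coincides with the $\Z_p$-linear map $x\mapsto x\,Y_{s-1}$ applied to the digit vector, from which Lemma~\ref{lema4} is immediate; but the coordinatewise version is the cleanest to present.
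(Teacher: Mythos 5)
Your proof is correct and follows the same route the paper intends: the paper's proof of this lemma is simply ``straightforward from the definition of $\phi_s$,'' and your write-up is exactly that verification, with the useful explicit observations that $\phi_s(p^i)$ is the $i$-th row of $Y_{s-1}$ and that no carries occur since $\sum_{i=0}^{s-2}\lambda_i p^i < p^{s-1}$.
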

\begin{proof}
Straightforward form the definition of $\phi_s$.
\end{proof}

\bigskip
Let $\gamma_s\in\mathcal{S}_{p^{s-1}}$ be the permutation defined as follows: for a coordinate $k=jp^{s-2}+i+1 \in \{1,2,\dots, p^{s-1}\}$, where $j\in \{0,\dots, p-1\}$ and $i\in \{0,\dots, p^{s-2}-1\}$, $\gamma_s$ moves coordinate $k$ to coordinate $j+ip+1$. 
So, we can write $\gamma_s$ as
\begin{equation*}\label{eq:gamma}
\tiny
\left(
  \begin{array}{ccccccccccccc}
    1 & 2&  \ldots & p^{s-2} & p^{s-2}+1 & p^{s-2}+2 &\ldots & p^{s-2}+p & \ldots & p^{s-1}-p+1 &p^{s-1}-p+2 &\ldots &p^{s-1} \\
    1 & p+1&  \ldots &(p^{s-2}-1)p+1 &2 & p+2 & \ldots & (p^{s-2}-1)p+2 &\ldots & p &p+p & \ldots  &p^{s-1}\\
  \end{array}
\right).
\end{equation*}

\begin{example}
For $p=3$ and $s=3$,
 \begin{equation*}
 \gamma_3=\left(
   \begin{array}{ccccccccccc}
     1 & 2& 3& & 4& 5& 6& &7& 8& 9 \\
     1 & 4& 7& & 2& 5& 8& &3& 6& 9 \\
   \end{array}
 \right)=(2,4)(3,7)(6,8) \in \mathcal{S}_9.
 \end{equation*}
 and for $p=3$ and $s=4$, 
 \begin{multline*}
 \footnotesize
 \arraycolsep=1.4pt\def\arraystretch{1}
 \gamma_4=\left(
   \begin{array}{ccccccccccccccccccccccccccccc}
     1 & 2& 3&  4&  5&  6&  7&  8&  9& & 10& 11& 12& 13& 14& 15& 16& 17& 18 &  &19& 20& 21& 22& 23& 24& 25& 26& 27 \\
     1 & 4& 7& 10& 13& 16& 19& 22& 25& &  2&  5&  8& 11& 14& 17& 20& 23& 26 &  & 3&  6&  9& 12& 15& 18& 21& 24& 27 \\
 \end{array}
 \right)\\
 =(2,4,10)(3,7,19)(5,13,11)(6,16,20)(8,22,12)(9,25,21)(15,17,23)(18,26,24)\in \mathcal{S}_{27}.
 \end{multline*}
\end{example}

\begin{lemma}\label{lem:gamma}
Let $s\geq 2$, $u=(\zero, \one, \dots, \pmo)\in \Z_p^{p^{s-1}}$ and $v=(0,1,\dots,p-1)\in\Z_p^p$. Then,
\begin{enumerate}[label=(\roman*)]
\item $\gamma_s(u)=(v, \stackrel{p^{s-2}}{\dots},v)$,
\item $\gamma_s(v, \stackrel{p^{s-2}}{\dots},v)=u$. 
\end{enumerate}
\end{lemma}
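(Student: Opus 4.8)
The plan is to prove both parts by a direct coordinate computation straight from the definition of $\gamma_s$; essentially all the work will be keeping track of two competing ``block decompositions'' of the index set $\{1,\dots,p^{s-1}\}$ together with the direction in which $\gamma_s$ acts on the entries of a vector. I will index a coordinate as $k=jp^{s-2}+i+1$ with $j\in\{0,\dots,p-1\}$ and $i\in\{0,\dots,p^{s-2}-1\}$; under this indexing $u$ is constant equal to $j$ on the $(j{+}1)$-st block (of length $p^{s-2}$), so $u_k=j$. Dually, writing the same $k$ as $k=i'p+j'+1$ with $i'\in\{0,\dots,p^{s-2}-1\}$ and $j'\in\{0,\dots,p-1\}$, the vector $(v,\dots,v)$ consisting of $p^{s-2}$ copies of $v=(0,1,\dots,p-1)$ has its $k$-th entry equal to the $(j'{+}1)$-st entry of the $(i'{+}1)$-st copy of $v$, namely $j'$. (Equivalently: identifying $k-1$ with its base-$p$ expansion of length $s-1$, the map $\gamma_s$ cyclically permutes these digits, $u_k$ is the leading digit of $k-1$, and the $k$-th entry of $(v,\dots,v)$ is the trailing digit; this viewpoint also makes visible that $\gamma_s$ is an involution exactly when $s\le 3$.)

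For (i): I will take a coordinate $k=jp^{s-2}+i+1$ of $u$, which carries the symbol $u_k=j$; by definition $\gamma_s$ moves it to the coordinate $j+ip+1=ip+j+1$. In the dual indexing this coordinate has $i'=i$ and $j'=j$, i.e. it is the $(j{+}1)$-st entry of the $(i{+}1)$-st copy of $v$ inside $(v,\dots,v)$, whose value is exactly $j$. Hence $\gamma_s(u)$ and $(v,\dots,v)$ agree in the coordinate $ip+j+1$, and since $(i,j)$ running over all admissible pairs exhausts the coordinates, $\gamma_s(u)=(v,\dots,v)$.

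For (ii): this is the converse of (i), and I will prove it by the analogous index chase read in the other direction: a coordinate of $(v,\dots,v)$ carrying the symbol $j'$ and lying in the $(i'{+}1)$-st copy of $v$ is the coordinate $i'p+j'+1$, and one checks that $\gamma_s$ carries it back into the $(j'{+}1)$-st constant block of $u$, where the symbol is $j'$; hence $\gamma_s(v,\dots,v)=u$. For $s\in\{2,3\}$ one may instead simply note that $\gamma_s$ is an involution, so (ii) is immediate from (i).

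The genuinely delicate point — and the place I expect to have to be careful — is fixing once and for all the convention for how $\sigma\in\cS_{p^{s-1}}$ acts on the entries of a vector (``the value in position $k$ is moved to position $\sigma(k)$'' versus ``position $k$ receives the value currently in position $\sigma(k)$''), because parts (i) and (ii) are near-mirror images of one another and applying $\gamma_s$ in the wrong direction would quietly break one of them. Accordingly I would first verify both identities on the explicit instance $s=3$, $p=3$ from the text ($\gamma_3=(2,4)(3,7)(6,8)\in\cS_9$, $u=(0,0,0,1,1,1,2,2,2)$, $(v,v,v)=(0,1,2,0,1,2,0,1,2)$) to lock down the convention, and only then write out the general argument; the remainder is routine index-chasing.
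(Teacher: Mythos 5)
Your argument for (i) is correct and is precisely the ``straightforward'' index chase the paper intends: under the convention (the one forced by the paper's later computations, e.g.\ the identity for $\gamma_s^{-1}(Y_{s-1})$ in the proof of Lemma~\ref{lem:tau}) that the entry in position $k$ is moved to position $\gamma_s(k)$, the entry $j$ of $u$ at position $k=jp^{s-2}+i+1$ lands at position $ip+j+1$, where $(v,\stackrel{p^{s-2}}{\dots},v)$ also has entry $j$.

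The gap is in (ii), at the words ``one checks that $\gamma_s$ carries it back into the $(j'{+}1)$-st constant block of $u$''. That check fails for $s\geq 4$. Reading the chase of (i) backwards shows that it is $\gamma_s^{-1}$, not $\gamma_s$, that carries the coordinate $i'p+j'+1$ to the coordinate $j'p^{s-2}+i'+1$ inside the $(j'{+}1)$-st block of $u$; so your argument proves $\gamma_s^{-1}(v,\dots,v)=u$, which is (i) restated, and not $\gamma_s(v,\dots,v)=u$. Concretely, for $p=3$, $s=4$ the entry $1$ at position $2$ of $(v,\stackrel{9}{\dots},v)$ is sent by $\gamma_4=(2,4,10)\cdots$ to position $4$, which lies in the block $\mathbf{0}$ of $u$; equivalently $\gamma_4(v,\stackrel{9}{\dots},v)_{10}=(v,\stackrel{9}{\dots},v)_{\gamma_4^{-1}(10)}=(v,\stackrel{9}{\dots},v)_{4}=0\neq 1=u_{10}$. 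Your own observation that $\gamma_s$ is an involution exactly for $s\leq 3$ is the structural obstruction you should have heeded: (i) and (ii) together would force $\gamma_s^{2}$ to fix $u$, but $\gamma_s$ cyclically shifts the $s-1$ base-$p$ digits of $k-1$ and hence has order $s-1$, so $\gamma_s^{2}(u)\neq u$ once $s\geq 4$. (The literal item (ii) is thus only correct with $\gamma_s^{-1}$ in place of $\gamma_s$ when $s\geq 4$; the paper only ever invokes the direction $\gamma_s^{-1}(v,\dots,v)=u$, so nothing downstream breaks, but your proof should state and prove that corrected version rather than assert the false step.) Finally, your safeguard of locking down the convention on the $s=3$, $p=3$ instance cannot detect any of this, since that is exactly the involutive case in which the two conventions, and the two directions of (ii), are indistinguishable; the test must be run with $s\geq 4$.
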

\begin{proof}
Straightforward form the definition of $\gamma_s$.
\end{proof}

Then, we can define the map $\tau_s:\Z_{p^s}\rightarrow \Z_{p^{s-1}}^p$ as
\begin{equation}\label{eq:tau}
\tau_s(u)=\Phi_{s-1}^{-1}(\gamma_s^{-1}(\phi_s(u))),
\end{equation}
where $u\in\Z_{p^s}$.

\begin{example}
For $p=3$ and $s=3$, we have
\footnotesize
\begin{equation}\label{eq:equivbase}
\left.\begin{array}{cccccccl}
\phi_3(0) & = & (0,0,0,0,0,0,0,0,0) & = &\gamma_3(0,0,0,0,0,0,0,0,0) = \gamma_3(\Phi_2(0,0,0))&\\
\phi_3(1) & = & (0,1,2,0,1,2,0,1,2) & = &\gamma_3(0,0,0,1,1,1,2,2,2) = \gamma_3(\Phi_2(0,3,6))&\\
\phi_3(2) & = & (0,2,1,0,2,1,0,2,1) & = &\gamma_3(0,0,0,2,2,2,1,1,1) = \gamma_3(\Phi_2(0,6,3))&\\
\phi_3(3) & = & (0,0,0,1,1,1,2,2,2) & = &\gamma_3(0,1,2,0,1,2,0,1,2) = \gamma_3(\Phi_2(1,1,1))&\\
\phi_3(4) & = & (0,1,2,1,2,0,2,0,1) & = &\gamma_3(0,1,2,1,2,0,2,0,1) = \gamma_3(\Phi_2(1,4,7))&\\
\phi_3(5) & = & (0,2,1,1,0,2,2,1,0) & = &\gamma_3(0,1,2,2,0,1,1,2,0) = \gamma_3(\Phi_2(1,7,4))&\\
\phi_3(6) & = & (0,0,0,2,2,2,1,1,1) & = &\gamma_3(0,2,1,0,2,1,0,2,1) = \gamma_3(\Phi_2(2,2,2))&\\
\phi_3(7) & = & (0,1,2,2,0,1,1,2,0) & = &\gamma_3(0,2,1,1,0,2,2,1,0) = \gamma_3(\Phi_2(2,5,8))&\\
\phi_3(8) & = & (0,2,1,2,1,0,1,0,2) & = &\gamma_3(0,2,1,2,1,0,1,0,2) = \gamma_3(\Phi_2(2,8,5))&\\
\phi_3(9) & = & (1,1,1,1,1,1,1,1,1) & = &\gamma_3(1,1,1,1,1,1,1,1,1) = \gamma_3(\Phi_2(3,3,3))&\\
\phi_3(10) & = & (1,2,0,1,2,0,1,2,0) & = &\gamma_3(1,1,1,2,2,2,0,0,0) = \gamma_3(\Phi_2(3,6,0))&\\
\phi_3(11) & = & (1,0,2,1,0,2,1,0,2) & = &\gamma_3(1,1,1,0,0,0,2,2,2) = \gamma_3(\Phi_2(3,0,6))&\\
\phi_3(12) & = & (1,1,1,2,2,2,0,0,0) & = &\gamma_3(1,2,0,1,2,0,1,2,0) = \gamma_3(\Phi_2(4,4,4))&\\
\phi_3(13) & = & (1,2,0,2,0,1,0,1,2) & = &\gamma_3(1,2,0,2,0,1,0,1,2) = \gamma_3(\Phi_2(4,7,1))&\\
\phi_3(14) & = & (1,0,2,2,1,0,0,2,1) & = &\gamma_3(1,2,0,0,1,2,2,0,1) = \gamma_3(\Phi_2(4,1,7))&\\
\phi_3(15) & = & (1,1,1,0,0,0,2,2,2) & = &\gamma_3(1,0,2,1,0,2,1,0,2) = \gamma_3(\Phi_2(5,5,5))&\\
\phi_3(16) & = & (1,2,0,0,1,2,2,0,1) & = &\gamma_3(1,0,2,2,1,0,0,2,1) = \gamma_3(\Phi_2(5,8,2))&\\
\phi_3(17) & = & (1,0,2,0,2,1,2,1,0) & = &\gamma_3(1,0,2,0,2,1,2,1,0) = \gamma_3(\Phi_2(5,2,8))&\\
\phi_3(18) & = & (2,2,2,2,2,2,2,2,2) & = &\gamma_3(2,2,2,2,2,2,2,2,2) = \gamma_3(\Phi_2(6,6,6))&\\
\phi_3(19) & = & (2,0,1,2,0,1,2,0,1) & = &\gamma_3(2,2,2,0,0,0,1,1,1) = \gamma_3(\Phi_2(6,0,3))&\\
\phi_3(20) & = & (2,1,0,2,1,0,2,1,0) & = &\gamma_3(2,2,2,1,1,1,0,0,0) = \gamma_3(\Phi_2(6,3,0))&\\
\phi_3(21) & = & (2,2,2,0,0,0,1,1,1) & = &\gamma_3(2,0,1,2,0,1,2,0,1) = \gamma_3(\Phi_2(7,7,7))&\\
\phi_3(22) & = & (2,0,1,0,1,2,1,2,0) & = &\gamma_3(2,0,1,0,1,2,1,2,0) = \gamma_3(\Phi_2(7,1,4))&\\
\phi_3(23) & = & (2,1,0,0,2,1,1,0,2) & = &\gamma_3(2,0,1,1,2,0,0,1,2) = \gamma_3(\Phi_2(7,4,1))&\\
\phi_3(24) & = & (2,2,2,1,1,1,0,0,0) & = &\gamma_3(2,1,0,2,1,0,2,1,0) = \gamma_3(\Phi_2(8,8,8))&\\
\phi_3(25) & = & (2,0,1,1,2,0,0,1,2) & = &\gamma_3(2,1,0,0,2,1,1,0,2) = \gamma_3(\Phi_2(8,2,5))&\\
\phi_3(26) & = & (2,1,0,1,0,2,0,2,1) & = &\gamma_3(2,1,0,1,0,2,0,2,1) = \gamma_3(\Phi_2(8,5,2))&\\
\end{array}\right.
\end{equation}

These equalities define the map $\tau_3: \Z_{27} \rightarrow \Z_9^3$ as
\begin{equation*}
\left.\begin{array}{cccccccccl}
     \tau_3(0)=(0,0,0) &&  \tau_3(9)  =(3,3,3)  && \tau_3(18)  =(6,6,6))\\
   \tau_3(1)=(0,3,6) &&   \tau_3(10)  =(3,6,0)    &&  \tau_3(19)  =(6,0,3)  \\ 
    \tau_3(2) =(0,6,3)  &&    \tau_3(11)  =(3,0,6)   &&    \tau_3(20)  =(6,3,0)  \\
     \tau_3(3)  =(1,1,1)  &&   \tau_3(12)  =(4,4,4)  &&\tau_3(21) =(7,7,7)  \\
      \tau_3(4)  =(1,4,7)  &&   \tau_3(13)  =(4,7,1)   && \tau_3(22)  =(7,1,4)   \\
       \tau_3(5)  =(1,7,4)  &&  \tau_3(14)  =(4,1,7)    &&  \tau_3(23)  =(7,4,1)  \\
        \tau_3(6)  =(2,2,2)  && \tau_3(15)  =(5,5,5)   &&  \tau_3(24)  =(8,8,8)  \\
         \tau_3(7)  =(2,5,8)  &&   \tau_3(16)  =(5,8,2)  &&  \tau_3(25)  =(8,2,5)  \\
          \tau_3(8)  =(2,8,5)  &&   \tau_3(17)  =(5,2,8)   && \tau_3(26) =(8,5,2) \\
\end{array}\right.
\end{equation*}
\end{example}

\begin{lemma}\cite{HadamardZps}\label{lem:phi}
Let $\lambda \in \Z_p$. Then, 
$\phi_s(\lambda p^{s-1})= (\lambda, \stackrel{p^{s-1}}{\dots},\lambda)$.
\end{lemma}
\begin{lemma} \label{lem:tau}
Let $s\geq 2$. Then,
\begin{enumerate}[label=(\roman*)]
\item $\tau_s(1)=(0,p^{s-2},\dots,(p-1)p^{s-2})$,
\item $\tau_s(p^iu)=p^{i-1}(u,\stackrel{p}{\dots},u)$ for all $i\in\{1,\dots,s-1\}$ and $u\in\{0,1,\dots,p^{s-1}-1\}$.
\end{enumerate}
\end{lemma}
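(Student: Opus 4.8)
The plan is to expand the definition $\tau_s(u)=\Phi_{s-1}^{-1}\bigl(\gamma_s^{-1}(\phi_s(u))\bigr)$ and to follow the vector $\phi_s(u)$ through the two remaining operations: compute $\phi_s(u)$ directly from (\ref{genGraymap}), apply the coordinate permutation $\gamma_s^{-1}$ with the help of Lemma \ref{lem:gamma}, and finally invert $\Phi_{s-1}$ coordinate-block by coordinate-block using Lemma \ref{lem:phi}. For part (ii) I would first reduce to the case $i=1$. Since $p^iu\equiv p\cdot(p^{i-1}u)\pmod{p^s}$, we have $\phi_s(p^iu)=\phi_s(pw)$, where $w\in\{0,\dots,p^{s-1}-1\}$ denotes the reduction of $p^{i-1}u$ modulo $p^{s-1}$; hence $\tau_s(p^iu)=\tau_s(pw)$, and since $w=p^{i-1}u$ in $\Z_{p^{s-1}}$ we get $(w,\stackrel{p}{\dots},w)=p^{i-1}(u,\stackrel{p}{\dots},u)$. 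So it is enough to prove the case $i=1$, namely $\tau_s(pw)=(w,\stackrel{p}{\dots},w)$ for all $w\in\{0,\dots,p^{s-1}-1\}$.

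For part (i), the $p$-ary expansion of $1$ is $[1,0,\dots,0]_p$, so by (\ref{genGraymap}) the vector $\phi_s(1)=(1,0,\dots,0)Y_{s-1}$ is the first row of $Y_{s-1}$, which by the ordering of its columns equals $(v,\stackrel{p^{s-2}}{\dots},v)$ with $v=(0,1,\dots,p-1)$ (precisely the vector appearing in the examples). By Lemma \ref{lem:gamma}(i), $\gamma_s\bigl((\zero,\one,\dots,\pmo)\bigr)=(v,\stackrel{p^{s-2}}{\dots},v)$, hence $\gamma_s^{-1}(\phi_s(1))=(\zero,\one,\dots,\pmo)$. Splitting this length-$p^{s-1}$ vector into $p$ consecutive blocks of length $p^{s-2}$ and applying Lemma \ref{lem:phi} with $s-1$ in place of $s$, each block $(\lambda,\stackrel{p^{s-2}}{\dots},\lambda)$, $\lambda\in\{0,\dots,p-1\}$, equals $\phi_{s-1}(\lambda p^{s-2})$, so $\Phi_{s-1}^{-1}\bigl((\zero,\one,\dots,\pmo)\bigr)=(0,p^{s-2},2p^{s-2},\dots,(p-1)p^{s-2})=\tau_s(1)$.

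The core of part (ii), and the step I expect to be the main obstacle, is the following identity (equivalent, after applying $\gamma_s^{-1}$ and $\Phi_{s-1}^{-1}$, to $\tau_s(pw)=(w,\stackrel{p}{\dots},w)$):
\begin{equation*}
\phi_s(pw)=\gamma_s\bigl(\Phi_{s-1}(w,\stackrel{p}{\dots},w)\bigr),\qquad w\in\Z_{p^{s-1}}.
\end{equation*}
To prove it I would use the recursive column structure of $Y_{s-1}$: its top $s-2$ rows consist of $p$ side-by-side copies of $Y_{s-2}$, and its last row is constant, equal to $k$, on the $k$-th block of $p^{s-2}$ columns. Because $pw$ has $0$ as its lowest $p$-ary digit, the contribution of the first row of $Y_{s-1}$ to $\phi_s(pw)$ vanishes, and every remaining row of $Y_{s-1}$ is constant on each block of $p$ consecutive columns; hence $\phi_s(pw)$ is obtained from a length-$p^{s-2}$ vector by repeating each entry $p$ times in a row, and a short computation with the $p$-ary digits identifies that length-$p^{s-2}$ vector with $\phi_{s-1}(w)$. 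On the other hand, $\Phi_{s-1}(w,\stackrel{p}{\dots},w)$ is $p$ consecutive copies of $\phi_{s-1}(w)$, and since $\gamma_s$ sends coordinate $jp^{s-2}+i'+1$ to coordinate $j+i'p+1$, it turns $p$ consecutive copies of a length-$p^{s-2}$ vector into that vector with each entry repeated $p$ times; comparing the two descriptions gives the identity. Together with the reduction in the first paragraph, this yields $\tau_s(p^iu)=p^{i-1}(u,\stackrel{p}{\dots},u)$. (Alternatively, one could write $p^iu$ as a $\Z_p$-linear combination of $p^i,\dots,p^{s-1}$, push $\phi_s$ through it via Lemma \ref{lema4} and Lemma \ref{lem:phi}, and then evaluate $\gamma_s^{-1}$ on each $\phi_s(p^{\ell})$ separately; this requires essentially the same understanding of how $\gamma_s$ permutes the rows of $Y_{s-1}$.)
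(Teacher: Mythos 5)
Your proof is correct and follows essentially the same route as the paper: expand $\tau_s$ via its definition, use the recursive column structure of $Y_{s-1}$ together with Lemmas \ref{lem:gamma} and \ref{lem:phi}, and exploit that the lowest $p$-ary digit of $p^iu$ vanishes so that only block-constant rows of $Y_{s-1}$ survive. The only cosmetic differences are that you first reduce to $i=1$ and then verify $\phi_s(pw)=\gamma_s(\Phi_{s-1}(w,\stackrel{p}{\dots},w))$ by computing both sides, whereas the paper treats general $i$ at once by establishing the matrix identity $\gamma_s^{-1}(Y_{s-1})=\bigl(\begin{smallmatrix}\zero & \cdots & \mathbf{p-1}\\ Y_{s-2} & \cdots & Y_{s-2}\end{smallmatrix}\bigr)$ and letting the zero digit kill its first row.
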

\begin{proof}
First, let  $v=(0,1,\dots,p-1)\in\Z_p^p$. Then,
\begin{equation*}
    \begin{split}
        \tau_s(1) &=\Phi_{s-1}^{-1}(\gamma_s^{-1}(\phi_s(1)))\\
        &=\Phi_{s-1}^{-1}(\gamma_s^{-1}(v, \stackrel{p^{s-2}}{\dots},v))\\
        &=\Phi_{s-1}^{-1}(\mathbf{0},\mathbf{1},\dots,\mathbf{p-1}), \textrm {by Lemma \ref{lem:gamma}}.    
    \end{split}
\end{equation*}
Therefore by Lemma \ref{lem:phi},  $\tau_s(1)=(0,p^{s-2},\dots,(p-1)p^{s-2})$, and $(i)$ holds.

In order to prove $(ii)$, let $u\in\Z_{p^s}$ and $[u_0,\dots,u_{s-1}]_p$ be its p-ary expansion. The p-ary expansion of $p^{i}u$ is $[0,\stackrel{i}{\dots},0,u_0,\dots,u_{s-i-1}]_p$ and we have that $\phi_s(p^{i}u)=(u_{s-i-1},\dots,u_{s-i-1})+(0,\dots,0,u_0,\dots,u_{s-i-2})Y_{s-1}$.
Recall that the matrix $Y_{s-1}$ given in (\ref{genGraymap}), related to the definition of $\phi_s$,  is a matrix of size $(s-1)\times p^{s-1}$ whose columns are the elements of $\Z_p^{s-1}$. Without loss of generality, we consider that $Y_{s}$ is the matrix obtained recursively from $Y_{1}=(0~1~\cdots~p-1)$ and
\begin{equation}\label{eq:RecursiveYs}
Y_{s}=\left(\begin{array}{cccc}
Y_{s-1} & Y_{s-1} &\cdots & Y_{s-1}  \\
\zero & \one & \cdots & \mathbf{p-1}\\
\end{array}\right).
\end{equation}
By Lemma \ref{lem:gamma}, we can write

\begin{equation}
\gamma_s^{-1}(Y_{s-1})=\left(\begin{array}{cccc}
\zero & \one &\cdots & \mathbf{p-1}\\
Y_{s-2} & Y_{s-2} &\cdots &Y_{s-2} \\
\end{array}\right).
\end{equation}
Then, we have that

\begin{equation*}
\begin{split}
 &\gamma_s^{-1}(\phi_s(p^{i}u))\\
&=(u_{s-i-1},\stackrel{p^{s-1}}{\dots},u_{s-i-1})+(0,\stackrel{i}{\dots},0,u_0,\dots,u_{s-i-2})\left(\begin{array}{cccc}
\zero & \one &\cdots & \mathbf{p-1}\\
Y_{s-2} & Y_{s-2} &\cdots &Y_{s-2} \\
\end{array}\right)\\
&=(u_{s-i-1},\stackrel{p^{s-1}}{\dots},u_{s-i-1})+(0,\stackrel{i-1}{\dots},0,u_0,\dots,u_{s-i-2})\left(\begin{array}{cccc}Y_{s-2} & Y_{s-2} &\cdots &Y_{s-2}  \\\end{array}\right)\\
&=(\phi_{s-1}(p^{i-1}u),\phi_{s-1}(p^{i-1}u),\dots,\phi_{s-1}(p^{i-1}u))=\Phi_{s-1}(p^{i-1}(u,\stackrel{p}{\dots},u)).   
\end{split}
\end{equation*}
Therefore, $\tau_s(p^{i}u)=\Phi_{s-1}^{-1}(\gamma_s^{-1}(\phi_s(p^{i}u)))=p^{i-1}(u,\stackrel{p}{\dots},u)$, and $(ii)$ holds.
\end{proof}

\begin{proposition}\label{prop:formulaPhis}
Let $s\geq2$ and $\lambda_i\in \{0,1,\dots,p-1\}$, $i\in\{0,\dots,s-1\}$. Then,
\begin{equation}\label{eq:sumPhi8}
\phi_s(\sum_{i=0}^{s-1} \lambda_ip^i)=\gamma_s(\Phi_{s-1}(\sum_{i=0}^{s-1}\tau_s(\lambda_ip^i))).
\end{equation}
\end{proposition}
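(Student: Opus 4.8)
The plan is to prove the identity by relating both sides to the already-established maps $\tau_s$ and $\gamma_s$, using the linearity-type property of $\phi_s$ from Lemma~\ref{lema4} and the explicit description of $\tau_s$ from Lemma~\ref{lem:tau}. First I would apply $\gamma_s^{-1}$ to both sides, so that the claim becomes $\gamma_s^{-1}(\phi_s(\sum_i \lambda_i p^i)) = \Phi_{s-1}(\sum_i \tau_s(\lambda_i p^i))$. The natural strategy is to split the sum $\sum_{i=0}^{s-1}\lambda_i p^i$ into its ``lower'' part $a=\sum_{i=0}^{s-2}\lambda_i p^i$ and its ``top'' part $\lambda_{s-1}p^{s-1}$, and treat these two contributions separately, then recombine.

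For the top part, Lemma~\ref{lem:phi} gives $\phi_s(\lambda_{s-1}p^{s-1}) = (\lambda_{s-1},\dots,\lambda_{s-1})$, which is fixed by $\gamma_s^{-1}$, and Lemma~\ref{lem:tau}(ii) with $i=s-1$, $u=\lambda_{s-1}$ gives $\tau_s(\lambda_{s-1}p^{s-1}) = p^{s-2}(\lambda_{s-1},\dots,\lambda_{s-1})$; applying $\Phi_{s-1}$ to the latter and using Lemma~\ref{lem:phi} at level $s-1$ componentwise returns the constant vector $(\lambda_{s-1},\dots,\lambda_{s-1})\in\Z_p^{p^{s-1}}$, so the two sides agree on this piece. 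For the lower part, I would use Lemma~\ref{lema4} to write $\phi_s(a) = \sum_{i=0}^{s-2}\lambda_i \phi_s(p^i)$, apply the (linear) permutation $\gamma_s^{-1}$ termwise to get $\gamma_s^{-1}(\phi_s(a)) = \sum_{i=0}^{s-2}\lambda_i \gamma_s^{-1}(\phi_s(p^i))$, and then recognize, from the computation inside the proof of Lemma~\ref{lem:tau}, that $\gamma_s^{-1}(\phi_s(p^i)) = \Phi_{s-1}(\tau_s(p^i))$ — indeed, that proof shows $\gamma_s^{-1}(\phi_s(p^i u)) = \Phi_{s-1}(\tau_s(p^i u))$ for $i\geq 1$, and for $i=0$ this is simply the definition $\tau_s(u)=\Phi_{s-1}^{-1}(\gamma_s^{-1}(\phi_s(u)))$ (here $p^0\lambda_0=\lambda_0\in\{0,\dots,p-1\}$, so $\lambda_0 = 1\cdot\lambda_0$ and $\tau_s(\lambda_0)$ is well-defined). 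Since $\Phi_{s-1}$ is componentwise $\phi_{s-1}$ and $\phi_{s-1}$ is additive on multiples of powers of $p$ below $p^{s-2}$ (Lemma~\ref{lema4} at level $s-1$), I can pull the sum inside: $\sum_{i=0}^{s-2}\lambda_i \Phi_{s-1}(\tau_s(p^i)) = \Phi_{s-1}(\sum_{i=0}^{s-2}\tau_s(\lambda_i p^i))$, noting that by Lemma~\ref{lem:tau} each $\tau_s(p^i)$ for $i\ge 1$ has all entries in $p^{i-1}\Z_{p^{s-1}}$ and $\tau_s(1)$ has entries in $p^{s-2}\Z_{p^{s-1}}$, so scalar multiplication by $\lambda_i$ commutes appropriately with $\phi_{s-1}$.

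The last step is to recombine: $\gamma_s^{-1}(\phi_s(\sum_i \lambda_i p^i))$ equals $\gamma_s^{-1}(\phi_s(a)) + \gamma_s^{-1}(\phi_s(\lambda_{s-1}p^{s-1}))$ — this additive splitting at level $s$ is exactly what Lemma~\ref{lema4} type reasoning (applied with the top digit) provides, since $\phi_s(a + \lambda_{s-1}p^{s-1}) = \phi_s(a) + (\lambda_{s-1},\dots,\lambda_{s-1})$ follows directly from the defining formula \eqref{genGraymap} (the top digit only shifts the constant term). Likewise $\Phi_{s-1}(\sum_{i=0}^{s-1}\tau_s(\lambda_i p^i)) = \Phi_{s-1}(\sum_{i=0}^{s-2}\tau_s(\lambda_i p^i)) + \Phi_{s-1}(\tau_s(\lambda_{s-1}p^{s-1}))$ by the same additivity of $\phi_{s-1}$, since the first sum lies in $p^0\Z_{p^{s-1}}\cap(\text{lower part})$ while $\tau_s(\lambda_{s-1}p^{s-1})\in p^{s-2}\Z_{p^{s-1}}$ forces the cross terms in the Gray map to vanish. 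Matching the two pieces (top with top, lower with lower) as computed above finishes the argument. The main obstacle I anticipate is the bookkeeping in this last recombination: one must verify carefully that $\phi_{s-1}$ genuinely behaves additively when splitting the argument $\sum_{i=0}^{s-1}\tau_s(\lambda_i p^i)$ into the lower block and the constant top block, i.e.\ that the carries/interactions between the $p^{s-2}$-scaled top contribution and the lower contributions vanish — this is where the precise form of $Y_{s-1}$ and the recursion \eqref{eq:RecursiveYs} must be invoked rather than waved at. Everything else is a routine application of Lemmas~\ref{lema4}, \ref{lem:gamma}, \ref{lem:phi}, and \ref{lem:tau}.
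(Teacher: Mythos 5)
Your proof is correct and follows essentially the same route as the paper's: both reduce the identity to the termwise relation $\gamma_s(\Phi_{s-1}(\tau_s(\lambda_ip^i)))=\phi_s(\lambda_ip^i)$ (i.e.\ the definition of $\tau_s$) combined with digit-wise additivity of $\phi_s$ and $\phi_{s-1}$ via Lemma~\ref{lema4}, the explicit values of $\tau_s(\lambda_ip^i)$ from Lemma~\ref{lem:tau}, and the fact that the coordinate permutation $\gamma_s$ commutes with addition. Your separate handling of the top digit $\lambda_{s-1}p^{s-1}$ is additional bookkeeping that the paper compresses into a single appeal to Lemma~\ref{lema4}; it is harmless and in fact carefully covers the exponent $i=s-1$, which the lemma's statement formally omits.
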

\begin{proof}
By Lemma \ref{lem:tau}, we know that for all $i\in\{1,\dots,s-1\}$, $\tau_s(p^i)=(p^{i-1},p^{i-1}, \dots, p^{i-1})$ and $\tau_s(1)=(0,p^{s-2},\dots,(p-1)p^{s-2})$.
Then, by Lemma \ref{lema4}, we have that
\begin{equation*}
\gamma_s(\Phi_{s-1}(\sum_{i=0}^{s-1}\tau_s(\lambda_ip^i)))=
\gamma_s(\sum_{i=0}^{s-1}\Phi_{s-1}(\tau_s(\lambda_ip^i))).
\end{equation*}
Moreover, $\gamma_s$ commutes with the addition. Therefore, by applying the definition of the map $\tau_s$ given in (\ref{eq:tau}), we obtain that
$$
\gamma_s(\Phi_{s-1}(\sum_{i=0}^{s-1}\tau_s(\lambda_ip^i)))=
\sum_{i=0}^{s-1}\gamma_s(\Phi_{s-1}(\tau_s(\lambda_ip^i)))=
\sum_{i=0}^{s-1}\phi_s(\lambda_ip^i),
$$
which is equal to $\phi_s(\sum_{i=0}^{s-1} \lambda_ip^i)$ by Lemma \ref{lema4}.
\end{proof}

\begin{corollary}\label{coro:formulaPhis}
Let $s\geq2$ and $\lambda_i\in\{0,1,\dots,p-1\}$, $i\in\{0,\dots,s-1\}$. Then,
\begin{equation}
\tau_{s}(\sum_{i=0}^{s-1} \lambda_ip^i)=\sum_{i=0}^{s-1}\tau_s(\lambda_ip^i).
\end{equation}
\end{corollary}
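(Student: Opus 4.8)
The plan is to derive this corollary directly from Proposition~\ref{prop:formulaPhis} by "peeling off" the outer bijections $\gamma_s$ and $\Phi_{s-1}$. Recall that $\gamma_s\in\mathcal{S}_{p^{s-1}}$ is a permutation, hence injective, and $\Phi_{s-1}\colon\Z_{p^{s-1}}^{p}\to\Z_p^{p^{s-1}}$ is a bijection by construction; in particular the composition $\gamma_s\circ\Phi_{s-1}$ is injective. So the moment we know that
\begin{equation*}
\gamma_s\bigl(\Phi_{s-1}\bigl(\tau_s(\textstyle\sum_{i=0}^{s-1}\lambda_ip^i)\bigr)\bigr)
=\gamma_s\bigl(\Phi_{s-1}\bigl(\textstyle\sum_{i=0}^{s-1}\tau_s(\lambda_ip^i)\bigr)\bigr),
\end{equation*}
we may cancel $\gamma_s\circ\Phi_{s-1}$ from both sides and conclude $\tau_s(\sum_{i=0}^{s-1}\lambda_ip^i)=\sum_{i=0}^{s-1}\tau_s(\lambda_ip^i)$, which is exactly the claim.

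Thus the only thing to check is that the left-hand side of the displayed identity equals $\phi_s(\sum_{i=0}^{s-1}\lambda_ip^i)$. This is immediate from the definition~\eqref{eq:tau} of $\tau_s$: by construction $\phi_s(u)=\gamma_s(\Phi_{s-1}(\tau_s(u)))$ for every $u\in\Z_{p^s}$, since $\tau_s(u)=\Phi_{s-1}^{-1}(\gamma_s^{-1}(\phi_s(u)))$ and both $\gamma_s$ and $\Phi_{s-1}$ are invertible. Applying this with $u=\sum_{i=0}^{s-1}\lambda_ip^i$ gives $\gamma_s(\Phi_{s-1}(\tau_s(\sum_i\lambda_ip^i)))=\phi_s(\sum_i\lambda_ip^i)$. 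On the other hand, Proposition~\ref{prop:formulaPhis} states precisely that $\phi_s(\sum_i\lambda_ip^i)=\gamma_s(\Phi_{s-1}(\sum_i\tau_s(\lambda_ip^i)))$. Combining these two equalities yields the displayed identity, and then the cancellation described above finishes the proof.

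I do not anticipate a genuine obstacle here: the corollary is essentially a restatement of Proposition~\ref{prop:formulaPhis} after stripping the bijections $\gamma_s$ and $\Phi_{s-1}$, and the only point that needs a word of care is the injectivity of $\gamma_s\circ\Phi_{s-1}$, which is clear since a coordinate permutation is a bijection and $\Phi_{s-1}$ is a bijection by definition of the (generalized) Gray map. One could phrase the whole argument in a single line: apply $\Phi_{s-1}^{-1}\circ\gamma_s^{-1}$ to both sides of~\eqref{eq:sumPhi8} and use the definition of $\tau_s$ together with the fact that $\tau_s$ is applied coordinatewise through $\Phi_{s-1}^{-1}$. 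I would write the proof in that compact form.
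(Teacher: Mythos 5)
Your proposal is correct and is essentially the paper's own argument: the paper likewise applies $\Phi_{s-1}^{-1}\circ\gamma_s^{-1}$ to both sides of the identity in Proposition~\ref{prop:formulaPhis} and invokes the definition of $\tau_s$ in~(\ref{eq:tau}) to conclude. The compact one-line form you describe at the end is exactly how the paper writes it.
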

\begin{proof}
	By Proposition \ref{prop:formulaPhis}, we have that
\[
		 \phi_s(\sum_{i=0}^{s-1} \lambda_ip^i)=\gamma_s(\Phi_{s-1}(\sum_{i=0}^{s-1}\tau_s(\lambda_ip^i))),
		 \]
\noindent and, therefore,
		 \[
		 \Phi_{s-1}^{-1}(\gamma_s^{-1}(\phi_s(\sum_{i=0}^{s-1} \lambda_ip^i)))=\sum_{i=0}^{s-1}\tau_s(\lambda_ip^i),\\
		 \]
		\noindent that is, $\tau_s(\sum_{i=0}^{s-1} \lambda_i2^i)=\sum_{i=0}^{s-1}\tau_s(\lambda_ip^i)$.
\end{proof}

Now, we extend the permutation $\gamma_s\in\mathcal{S}_{p^{s-1}}$ to a permutation $\gamma_s\in\mathcal{S}_{p^{s-1}n}$
such that restricted to each set of $p^{s-1}$ coordinates $\{p^{s-1}i+1,p^{s-1}i+2,\dots,$ $p^{s-1}(i+1)\}$,
$i \in\{0,\dots,n-1\}$, acts as $\gamma_s \in \mathcal{S}_{p^{s-1}}$.
Then, we component-wise extend function $\tau_s$ defined in (\ref{eq:tau}) to $\tau_s:\Z_{p^s}^n\rightarrow
\Z_{p^{s-1}}^{pn}$ and define $\tilde{\tau}_s=\rho^{-1}\circ \tau_s$, where $\rho \in\cS_{pn}$ is defined as follows: for a coordinate $k=jp+i+1 \in \{1,2,\dots, pn\}$, where $i,j  \in \{0,\dots, n-1\}$, $\rho$ moves coordinate $k$ to coordinate $ip+j+1$.
So, we can write $\rho$ as
\begin{equation*}
\footnotesize
\left(
  \begin{array}{ccccccccccccc}
    1 & 2&  \ldots &n & n+1 & n+2 &\ldots & 2n & \ldots & pn-p+1 &pn-p+2 &\ldots &pn \\
    1 & p+1&  \ldots &(n-1)p+1 &2 & p+2 & \ldots & (n-1)p+2 &\ldots & p &p+p & \ldots  &pn\\
  \end{array}
\right).
\end{equation*}

\begin{example}
For $p=3$ and $n=2$,
 \begin{equation*}
 \rho=\left(
   \begin{array}{ccccccccccc}
     1 & 2& &3 & 4& &5& 6 \\
     1 & 4& &2 & 5& &3& 6 \\
   \end{array}
 \right) \in \mathcal{S}_6.
 \end{equation*}
 and for $p=3$ and $n=4$, 
 \begin{equation*}
 \rho=\left(
   \begin{array}{ccccccccccccccccccccccccccccc}
     1 &2 &3 &4 &  &5 &6 &7 &8 &  &9 &10 &11 &12\\
     1 &4 &7 &10 & &2 &5 &8 &11 &  &3 &6 &9 &12\\
 \end{array}
 \right)\in \mathcal{S}_{12}.
  \end{equation*}
\end{example}

\begin{remark}
If $\mathbf{u}=(u_1,u_2,\dots,u_n)\in \Z_{p^s}^n$ and $\tau_s(u_i)=(u_{i,1},u_{i,2},\dots,u_{i,p})$ for all $i \in \{1,\ldots,n\}$, then
$$\tau_s(\mathbf{u})=(u_{1,1},u_{1,2},\dots,u_{1,p},u_{2,1},u_{2,2},\dots,u_{2,p},\dots,u_{n,1},u_{n,2},\dots,u_{n,p}),$$ and
$$\tilde{\tau}_s(\mathbf{u})=(u_{1,1},u_{2,1}\dots,u_{n,1},u_{1,2},u_{2,2}\dots,u_{n,2},\dots, u_{1,p},\\u_{2,p},\dots,u_{n,p}).$$
\end{remark}

\begin{lemma}
   Let $s\geq 2$. Then, $\Phi_s(\mathbf{u})=\gamma_s(\Phi_{s-1}(\rho( \tilde{\tau}_s(\mathbf{u})  )))$
for all $\mathbf{u}\in \Z_{p^s}^n$.
\end{lemma}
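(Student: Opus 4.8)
The plan is to reduce the vector statement to the scalar identity already packaged in Proposition~\ref{prop:formulaPhis}. First I would unwind the definitions on the right-hand side coordinate by coordinate, exactly as in the Remark: writing $\mathbf{u}=(u_1,\dots,u_n)$ and $\tau_s(u_i)=(u_{i,1},\dots,u_{i,p})$, the map $\tilde{\tau}_s$ groups the coordinates ``column first'', while $\rho$ un-does that regrouping, so that $\rho(\tilde{\tau}_s(\mathbf{u}))=\tau_s(\mathbf{u})=(u_{1,1},\dots,u_{1,p},u_{2,1},\dots,u_{n,p})$, i.e.\ the blocks $\tau_s(u_i)$ laid out consecutively. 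Thus the claimed identity collapses to
\[
\Phi_s(\mathbf{u})=\gamma_s(\Phi_{s-1}(\tau_s(\mathbf{u}))),
\]
where now both $\gamma_s$ and $\tau_s$ are the component-wise (block-wise) extensions to length-$n$ input. The point of introducing $\rho$ and $\tilde{\tau}_s$ is only to re-order coordinates so that the Gray image over $\Z_{p^{s-1}}$ appears in the ``interleaved'' order matching $\Phi_{s-1}$ applied to a length-$pn$ vector; once $\rho$ is applied this bookkeeping disappears.

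Next I would observe that all three maps $\Phi_s$, $\Phi_{s-1}$, $\gamma_s$, and $\tau_s$ act block-wise on the $n$ coordinates of $\mathbf{u}$: $\Phi_s$ applies $\phi_s$ to each $u_i$ independently; $\tau_s$ applies $\tau_s$ to each $u_i$; $\gamma_s\in\mathcal{S}_{p^{s-1}n}$ was defined precisely to act as the length-$p^{s-1}$ permutation $\gamma_s$ on each block of $p^{s-1}$ coordinates; and $\Phi_{s-1}$ is component-wise $\phi_{s-1}$. Hence it suffices to prove the identity for $n=1$, i.e.\ for a single scalar $u\in\Z_{p^s}$:
\[
\phi_s(u)=\gamma_s\bigl(\Phi_{s-1}(\tau_s(u))\bigr).
\]
Writing $u=\sum_{i=0}^{s-1}\lambda_i p^i$ with $\lambda_i\in\{0,\dots,p-1\}$ and using Corollary~\ref{coro:formulaPhis}, we have $\tau_s(u)=\sum_{i=0}^{s-1}\tau_s(\lambda_i p^i)$, so $\Phi_{s-1}(\tau_s(u))=\Phi_{s-1}(\sum_{i=0}^{s-1}\tau_s(\lambda_i p^i))$, and applying $\gamma_s$ this is exactly the right-hand side of~\eqref{eq:sumPhi8} in Proposition~\ref{prop:formulaPhis}, which equals $\phi_s(u)$. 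This closes the scalar case, and the block-wise argument lifts it to general $n$.

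I do not expect any serious obstacle: the whole proof is the verification that the composition of re-indexing permutations $\rho$ and $\tilde{\tau}_s$ restores the consecutive-block layout $\tau_s(\mathbf{u})$, together with an appeal to Proposition~\ref{prop:formulaPhis}. The one place requiring a little care is checking that the \emph{extended} $\gamma_s\in\mathcal{S}_{p^{s-1}n}$ commutes with $\Phi_{s-1}(\cdot)$ taken block-wise in the way needed — that is, that permuting the $p^{s-1}$-coordinate blocks of $\Phi_{s-1}(\tau_s(\mathbf{u}))$ by $\gamma_s$ yields $\Phi_s(\mathbf{u})$ block by block. This is immediate from the definition of the extension of $\gamma_s$ (it acts independently on each block of $p^{s-1}$ coordinates) once one notes that $\Phi_{s-1}(\tau_s(u_i))$ occupies exactly one such block of length $p\cdot p^{s-2}=p^{s-1}$. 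So the plan is: (1) use the Remark to simplify $\rho\circ\tilde{\tau}_s=\tau_s$ on the left of $\Phi_{s-1}$; (2) reduce to $n=1$ by the block-wise nature of all maps involved; (3) invoke Corollary~\ref{coro:formulaPhis} and Proposition~\ref{prop:formulaPhis} to finish.
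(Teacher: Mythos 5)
Your proof is correct and follows essentially the same route as the paper, which simply unwinds $\tilde{\tau}_s=\rho^{-1}\circ\tau_s$ and the definition $\tau_s(\mathbf{u})=\Phi_{s-1}^{-1}(\gamma_s^{-1}(\Phi_s(\mathbf{u})))$ extended block-wise. The only difference is cosmetic: your detour through Proposition~\ref{prop:formulaPhis} and Corollary~\ref{coro:formulaPhis} for the scalar case is unnecessary, since $\phi_s(u)=\gamma_s(\Phi_{s-1}(\tau_s(u)))$ is just the defining equation~(\ref{eq:tau}) with the inverses moved to the other side.
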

\begin{proof}
It follows from $\tilde{\tau}_s(\mathbf{u})=\rho^{-1}(\tau_s(\mathbf{u}))=\rho^{-1}(\Phi_{s-1}^{-1}(\gamma_s^{-1}(\Phi_s(\mathbf{u}))))$.
\end{proof}

Let $\mathbf{w}_i^{(s)}$ be the $i$th row of $A^{t_1,\ldots,t_s}$, $1\leq i\leq t_1+\cdots+t_s$.
By construction, $\mathbf{w}_1^{(s)}=\one$ and  $\ord(\mathbf{w}_i^{(s)})\leq \ord(\mathbf{w}_j^{(s)})$ if $i>j$.
Let $\sigma_i$ be the integer such that $\ord(\mathbf{w}_i^{(s)} )=p^{\sigma_i}$.
Then, $\mathcal{B}^{t_1,\dots,t_s}=\{ p^{q_i}\mathbf{w}_i^{(s)} : 1\leq i\leq t_1+\cdots+t_s,\, 0\leq q_i\leq \sigma_i-1 \}$ is a $p$-basis of $\mathcal{H}^{t_1,\ldots,t_s}$.

\begin{lemma}\label{lem:WsWs1}
Let  $t_s\geq1$. Let $\mathbf{w}_i^{(s)}$ and $\mathbf{w}_i^{(s+1)}$ be the $i$th row of $A^{t_1,\ldots,t_s}$ and $A^{1,t_1-1,t_2,\ldots,t_{s-1},t_s-1}$, respectively. Then,
    $(\mathbf{w}_i^{(s+1)},\mathbf{w}_i^{(s+1)},\dots,\mathbf{w}_i^{(s+1)})=p\mathbf{w}_i^{(s)}$ and $\ord(\mathbf{w}_i^{(s)})=\ord(\mathbf{w}_i^{(s+1)})=p^{\sigma_i}$.
\end{lemma}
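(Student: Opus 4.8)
The plan is to unwind the recursive construction of the matrices $A^{t_1,\ldots,t_s}$ given in \eqref{eq:recGenMatrix} and to identify, row by row, how a matrix over $\Z_{p^s}$ with $t_s\geq 1$ relates to the corresponding matrix over $\Z_{p^{s+1}}$ obtained by prepending a row of order $p^{s+1}$ and removing a row of order $p$. Concretely, I would first fix notation: let $A=A^{t_1,\ldots,t_s}$ and let $A'=A^{1,t_1-1,t_2,\ldots,t_{s-1},t_s-1}$. Using the prescribed order of construction (first the $t_1-1$ rows of order $p^s$, then the $t_2$ rows of order $p^{s-1}$, and so on, finishing with the $t_s$ rows of order $p$), I would observe that $A'$ is built by exactly the same sequence of recursive steps \eqref{eq:recGenMatrix}, one extra block-step corresponding to the new top row of order $p^{s+1}$ compensated by one fewer of the final order-$p$ steps. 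The key combinatorial observation is that applying the construction step over $\Z_{p^{s+1}}$ and then multiplying everything by $p$ yields the same entries, three (in general $p$) times repeated, as applying the construction step over $\Z_{p^s}$: multiplication by $p$ collapses $\Z_{p^{s+1}}$ onto $p\Z_{p^{s+1}}\cong\Z_{p^s}$.

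The main argument would be an induction on $t_1+\cdots+t_s$ (equivalently on the number of recursion steps), tracking what happens to a fixed row index $i$. The base case is $A^{1,0,\ldots,0}=(1)$ against $A^{1,0,\ldots,0}$ over $\Z_{p^{s+1}}$, where the single row is $\one$ and the claim reads $(\one,\ldots,\one)=p\cdot\one$ — but here one must be slightly careful, since the statement concerns rows $i$ with $t_s\geq1$; I would instead induct starting from the smallest configuration with $t_s\geq 1$, namely where the order-$p$ rows have just been introduced, and argue that each subsequent recursion step \eqref{eq:recGenMatrix} preserves the identity $(\mathbf{w}_i^{(s+1)},\ldots,\mathbf{w}_i^{(s+1)})=p\mathbf{w}_i^{(s)}$ by the block structure: the step replaces $\mathbf{w}_i$ by $(\mathbf{w}_i\mid\mathbf{w}_i\mid\cdots\mid\mathbf{w}_i)$ for the old rows and appends a new row of the form $(0\cdot\mathbf{p^{j-1}}\mid 1\cdot\mathbf{p^{j-1}}\mid\cdots)$. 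For the old rows the identity is immediate from the inductive hypothesis, and for the newly appended row one checks directly that the order-$p^{j-1}$ row over $\Z_{p^s}$ multiplied by $p$ equals the order-$p^{j-1}$ row over $\Z_{p^{s+1}}$ repeated $p$ times (here the alignment of the block lengths — $p^{s-j+1}$ blocks over $\Z_{p^s}$ versus $p^{s-j+2}$ blocks over $\Z_{p^{s+1}}$ — is exactly what makes the repetition count come out to $p$). The order statement $\ord(\mathbf{w}_i^{(s)})=\ord(\mathbf{w}_i^{(s+1)})=p^{\sigma_i}$ then follows: multiplying a vector of order $p^{\sigma_i}$ by $p$ gives a vector of order $p^{\sigma_i-1}$ over $\Z_{p^{s+1}}$, but $\mathbf{w}_i^{(s)}$ lives over $\Z_{p^s}$ where the order of $p\mathbf{w}_i^{(s+1)}$ (reduced to the subring $p\Z_{p^{s+1}}$) is again $p^{\sigma_i}$ — more cleanly, I would just note that both matrices are constructed by the identical recipe of row-orders, so by definition the $i$th row has the same $\sigma_i$ in both, and the first identity forces consistency.

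The step I expect to be the main obstacle is the bookkeeping of \emph{which} row index $i$ in $A^{1,t_1-1,t_2,\ldots,t_{s-1},t_s-1}$ corresponds to which row of $A^{t_1,\ldots,t_s}$, because the two matrices are indexed over alphabets of different size and the recursive construction appends rows at the bottom while the "new" order-$p^{s+1}$ row sits at the top after the final row-permutation. I would handle this by being explicit that the intended correspondence pairs $\mathbf{w}_i^{(s)}$ with $\mathbf{w}_i^{(s+1)}$ \emph{after} both have been put in the standard decreasing-order-of-rows form described before Lemma~\ref{lem:WsWs1}, so that the $i$th row of each has order $p^{\sigma_i}$ with the same $\sigma_i$; the content of the lemma is precisely that under this matching the entrywise identity holds. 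Once the indexing convention is pinned down, the rest is the routine induction sketched above, driven entirely by the single algebraic fact that reduction modulo $p^s$ of $p\cdot(\text{entry in }\Z_{p^{s+1}})$ reproduces the entries of the $\Z_{p^s}$-construction, repeated $p$ times per recursion step.
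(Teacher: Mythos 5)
Your overall strategy---unwinding the recursive construction and matching it step by step against the construction of $A^{1,t_1-1,t_2,\ldots,t_{s-1},t_s-1}$ over $\Z_{p^{s+1}}$, with multiplication by $p$ as the dictionary---is sound, and is essentially an inductive version of what the paper does in one stroke. But the mechanism you give for the key step is wrong, and the inductive invariant as you state it cannot hold. When the construction steps are paired correctly, a row of order $p^m$ added to the $\Z_{p^s}$-matrix at level $j$ corresponds to a row of the \emph{same} order $p^m$ added to the $\Z_{p^{s+1}}$-matrix at level $j+1$, and \eqref{eq:recGenMatrix} then attaches $p^{s-j+1}$ blocks in the first case and $p^{(s+1)-(j+1)+1}=p^{s-j+1}$ blocks in the second: the block counts are \emph{equal}, not $p^{s-j+1}$ versus $p^{s-j+2}$. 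Consequently, at every intermediate stage the two partial matrices have the same number of columns, the correct invariant is $\mathbf{w}_i^{(s+1)}=p\,\mathbf{w}_i^{(s)}$ with no repetition, and your claimed identity $(\mathbf{w}_i^{(s+1)},\ldots,\mathbf{w}_i^{(s+1)})=p\mathbf{w}_i^{(s)}$ is dimensionally impossible there. Worse, if the block counts really differed by a factor of $p$ at each step, the repetition factor would compound to $p^{K}$ after $K$ steps instead of staying at $p$. The single factor of $p$ actually comes from the one \emph{unmatched} step: building $A^{t_1,\ldots,t_s}$ takes $(t_1-1)+t_2+\cdots+t_s$ recursion steps while building $A^{1,t_1-1,t_2,\ldots,t_{s-1},t_s-1}$ takes one fewer (one order-$p$ step is missing), and it is that final order-$p$ step on the $\Z_{p^s}$ side, which repeats every existing row $p$ times, that produces the $p$-fold concatenation in the statement. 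Once you repair the invariant accordingly---and restrict to $2\le i\le t_1+\cdots+t_s-1$, since for $i=1$ the orders genuinely differ, as you yourself noticed---the induction goes through.

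For comparison, the paper avoids the induction entirely: using the description of $A^{t_1,\ldots,t_s}$ by its column set $\{1\}\times T_1^{t_1-1}\times\cdots\times T_s^{t_s}$ and the observation that $pT_i$ equals $T_{i+1}$ read over $\Z_{p^{s+1}}$, the matrix with rows $\mathbf{w}_1^{(s)},p\mathbf{w}_2^{(s)},\ldots,p\mathbf{w}_{t_1+\cdots+t_s}^{(s)}$ is by definition $A^{1,t_1-1,t_2,\ldots,t_{s-1},t_s}$; peeling off one application of \eqref{eq:recGenMatrix} from that matrix then exhibits each of its rows except the last as $p$ concatenated copies of the corresponding row of $A^{1,t_1-1,t_2,\ldots,t_{s-1},t_s-1}$. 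That yields both the concatenation identity and the equality of orders directly, and it pins down the row correspondence you were worried about without any bookkeeping.
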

\begin{proof}
Consider $A^{t_1,\dots,t_s}$ with $t_s\geq1$, and $\mathbf{w}_i^{(s)}$ its $i$th row for $i\in\{1,\dots,t_1+\cdots+t_s\}$. Then, the matrix over $\Z_{p^{s+1}}$
$$
\left(\begin{array}{c}
\mathbf{w}_1^{(s)}\\
p\mathbf{w}_2^{(s)}\\
\vdots\\
p\mathbf{w}_{t_1+\cdots+t_s}^{(s)}\\
\end{array}\right)
$$
is, by definition, $A^{1,t_1-1,t_2,\dots,t_{s-1},t_s}$. Moreover, by construction we have that $A^{1,t_1-1,t_2,\dots,t_{s-1},t_s}$ is the matrix
$$
\left(\begin{array}{cccc}
A^{1,t_1-1,t_2,\dots,t_{s-1},t_s-1} & A^{1,t_1-1,t_2,\dots,t_{s-1},t_s-1}&\cdots &A^{1,t_1-1,t_2,\dots,t_{s-1},t_s-1}\\
 p^s\cdot \zero & p^s\cdot\mathbf{1}  &\cdots &p^s \cdot \mathbf{(p-1)}\\
\end{array}\right).
$$
Therefore, if $\mathbf{w}_i^{(s+1)}$ is the $i$th row of $A^{1,t_1-1,t_2,\dots,t_{s-1},t_s-1}$ for $i\in\{2,\dots,t_1+t_2+\dots+t_s-1\}$, we have that $(\mathbf{w}_i^{(s+1)},\mathbf{w}_i^{(s+1)},\dots,\mathbf{w}_i^{(s+1)})=p\mathbf{w}_i^{(s)}$ and $\ord(\mathbf{w}_i^{(s)})=\ord(\mathbf{w}_i^{(s+1)})=p^{\sigma_i}$.
\end{proof}

\begin{example} \label{ex:2bases}
Let ${\mathcal H}^{2,1}$ and ${\mathcal H}^{1,1,0}$ be the $\Z_9$-additive and $\Z_{27}$-additive GH codes, which are generated by

$$
\footnotesize
A^{2,1}=\left(\begin{array}{c}
\mathbf{w}_1^{(2)}\\
\mathbf{w}_2^{(2)}\\
\mathbf{w}_3^{(2)}
\end{array}\right)
=\left(\begin{array}{ccc}
\one & \one & \one\\
v & v & v \\
\zero & \mathbf{3} & \mathbf{9}
\end{array}\right),
$$
where $v=(0,1,2,3,4,5,6,7,8)$, and
$$
A^{1,1,0}=\left(\begin{array}{c}
\mathbf{w}_1^{(3)}\\
\mathbf{w}_2^{(3)}
\end{array}\right)
=\left(\begin{array}{ccccccccc}
1&1&1&1&1&1&1&1&1 \\
0&3&6&9&12&15&18&21&24 \\
\end{array}\right),
$$
respectively.
The corresponding $3$-bases are
\begin{align*}
\mathcal{B}^{2,1}=\{&\mathbf{w}_1^{(2)},3\mathbf{w}_1^{(2)},\mathbf{w}_2^{(2)},3\mathbf{w}_2^{(2)},\mathbf{w}_3^{(2)}\} \\
=\{&\mathbf{1},\mathbf{3},(0,1,2,3,4,5,6,7,8,\stackrel{3}{\dots},0,1,2,3,4,5,6,7,8),\\
&(0,3,6,9,12,15,18,21,24,\stackrel{3}{\dots},0,3,6,9,12,15,18,21,24)\},~ \mbox{and} \\
\mathcal{B}^{1,1,0}=\{&\mathbf{w}_1^{(3)},3\mathbf{w}_1^{(3)},3^2\mathbf{w}_1^{(3)},\mathbf{w}_2^{(3)},3\mathbf{w}_2^{(3)} \}\\
=\{&\mathbf{1},\mathbf{3},\mathbf{9},(0,3,6,9,12,15,18,21,24),(0,9,18,0,9,18,0,9,18)\}.\\
\end{align*}
\end{example}

\begin{proposition} \label{prop:basis}
Let  $t_s\geq1$, and $\mathcal{H}^{t_1,\ldots,t_s}$ and $\mathcal{H}^{1,t_1-1,t_2,\ldots,t_{s-1},t_s-1}$ be the $\Z_{p^s}$-additive and $\Z_{p^{s+1}}$-additive GH codes with generator matrices $A^{t_1,\ldots,t_s}$ and $A^{1,t_1-1,t_2,\ldots,t_{s-1},t_s-1}$, respectively. Let $\mathbf{w}_i^{(s)}$ and $\mathbf{w}_i^{(s+1)}$ be the $i$th row of $A^{t_1,\ldots,t_s}$ and $A^{1,t_1-1,t_2,\ldots,t_{s-1},t_s-1}$, respectively. Then, we have that
\begin{enumerate}[label=(\roman*)]
\item $\tilde{\tau}_{s+1}(p^{q_i}\mathbf{w}^{(s+1)}_i)=p^{q_i}\mathbf{w}^{(s)}_i$,  for all $i \in \{2,\dots,t_1+\cdots+t_s-1\}$ and $q_i\in\{0,\dots,\sigma_i-1\}$, where $\ord(\mathbf{w}_i^{(s)})=p^{\sigma_i}$;
\item $\tilde{\tau}_{s+1}(p^{j+1}\mathbf{w}^{(s+1)}_1)=p^{j}\mathbf{w}^{(s)}_1$,  for all $j\in\{0,\dots,s-1\}$;
\item $\tilde{\tau}_{s+1}(\mathbf{w}_1^{(s+1)})=\mathbf{w}^{(s)}_{t_1+\cdots +t_{s}}.$
\end{enumerate}
\end{proposition}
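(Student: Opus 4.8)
The plan is to evaluate $\tilde\tau_{s+1}$ directly on each element $p^{q}\mathbf{w}_i^{(s+1)}$ of the $p$-basis $\mathcal{B}^{1,t_1-1,t_2,\ldots,t_{s-1},t_s-1}$, combining the closed form of $\tau_{s+1}$ in Lemma \ref{lem:tau} (applied with $s$ replaced by $s+1$) with the componentwise description $\tilde\tau_{s+1}(\mathbf{u})=(u_{1,1},\dots,u_{n',1},u_{1,2},\dots,u_{n',2},\dots,u_{1,p},\dots,u_{n',p})$, valid when $\tau_{s+1}(u_k)=(u_{k,1},\dots,u_{k,p})$, recorded in the Remark above; here $n'$ denotes the length of $\mathcal{H}^{1,t_1-1,\ldots,t_s-1}$, which equals $n/p$ if $n$ is the length of $\mathcal{H}^{t_1,\ldots,t_s}$. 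Throughout I take the entries of $\mathbf{w}_i^{(s)}$ as integers in $\{0,\dots,p^s-1\}$ and those of $\mathbf{w}_i^{(s+1)}$ as integers in $\{0,\dots,p^{s+1}-1\}$, so that the identities of Lemma \ref{lem:WsWs1} may be read off coordinate by coordinate; the outcome of each computation is then matched against the appropriate row of $A^{t_1,\ldots,t_s}$.

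I would first dispose of $(ii)$ and $(iii)$, which involve only the all-one row $\mathbf{w}_1^{(s+1)}=\one$. For $(ii)$: every entry of $p^{j+1}\mathbf{w}_1^{(s+1)}$ equals $p^{j+1}=p^{j+1}\cdot 1$ with $1\le j+1\le s$, so Lemma \ref{lem:tau}$(ii)$ gives $\tau_{s+1}(p^{j+1})=p^{j}(1,\stackrel{p}{\dots},1)$, and since these $n'$ blocks are equal and constant, the reordering defining $\tilde\tau_{s+1}$ returns the constant vector $p^{j}\one=p^{j}\mathbf{w}_1^{(s)}$. For $(iii)$: every entry of $\mathbf{w}_1^{(s+1)}$ is $1$, so by Lemma \ref{lem:tau}$(i)$ each of the $n'$ blocks of $\tau_{s+1}(\mathbf{w}_1^{(s+1)})$ equals $\tau_{s+1}(1)=(0,p^{s-1},2p^{s-1},\dots,(p-1)p^{s-1})$; feeding these identical blocks into the description of $\tilde\tau_{s+1}$ collects equal coordinates together and produces $(0\cdot\mathbf{p^{s-1}},1\cdot\mathbf{p^{s-1}},\dots,(p-1)\cdot\mathbf{p^{s-1}})$, with each constant block of length $n'$. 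The key point is then that, applying the recursive step (\ref{eq:recGenMatrix}) with $i=s$ to $A^{t_1,\ldots,t_{s-1},t_s-1}$ (which has exactly $n'$ columns), this vector is precisely the row of $A^{t_1,\ldots,t_s}$ added in the final step of the construction, that is $\mathbf{w}_{t_1+\cdots+t_s}^{(s)}$.

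For $(i)$, fix $i\in\{2,\dots,t_1+\cdots+t_s-1\}$ and $q\in\{0,\dots,\sigma_i-1\}$; note $\sigma_i\le s$ since $\mathbf{w}_i^{(s)}$ has order dividing $p^s$. By Lemma \ref{lem:WsWs1}, $(\mathbf{w}_i^{(s+1)},\stackrel{p}{\dots},\mathbf{w}_i^{(s+1)})=p\,\mathbf{w}_i^{(s)}$; as the right-hand side has every entry divisible by $p$ and its $p$ length-$n'$ blocks all coincide, there is $\mathbf{u}\in\{0,\dots,p^s-1\}^{n'}$ with $\mathbf{w}_i^{(s+1)}=p\,\mathbf{u}$ and $\mathbf{w}_i^{(s)}=(\mathbf{u},\stackrel{p}{\dots},\mathbf{u})$. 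Every entry of $p^{q}\mathbf{w}_i^{(s+1)}$ is then of the form $p^{q+1}m$ with $m\in\{0,\dots,p^s-1\}$ and $1\le q+1\le s$, so Lemma \ref{lem:tau}$(ii)$ gives $\tau_{s+1}(p^{q+1}m)=p^{q}(m,\stackrel{p}{\dots},m)$; inserting these constant blocks into the description of $\tilde\tau_{s+1}$ yields the concatenation of $p$ copies of $p^{q}\mathbf{u}$, which equals $p^{q}(\mathbf{u},\stackrel{p}{\dots},\mathbf{u})=p^{q}\mathbf{w}_i^{(s)}$.

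The computations are short substitutions, so the main obstacle is purely organisational: one must fix compatible integer lifts in $\Z_{p^s}$ and $\Z_{p^{s+1}}$ so that the ``division by $p$'' extracted from Lemma \ref{lem:WsWs1} and the argument ranges of Lemma \ref{lem:tau} (in particular $1\le q+1\le s$) are legitimate, and one must read off the exact shape of the last row of $A^{t_1,\ldots,t_s}$ from the fixed recursive construction, which is what closes $(iii)$.
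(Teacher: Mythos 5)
Your argument is correct and follows essentially the same route as the paper's proof: use Lemma \ref{lem:WsWs1} to write $\mathbf{w}_i^{(s+1)}=p\mathbf{v}$ with $\mathbf{w}_i^{(s)}=(\mathbf{v},\stackrel{p}{\dots},\mathbf{v})$, apply Lemma \ref{lem:tau} coordinatewise, and track the reordering $\rho^{-1}$ to reassemble $p^{q}\mathbf{w}_i^{(s)}$. You merely spell out items $(ii)$ and $(iii)$ (including the identification of the last row of $A^{t_1,\dots,t_s}$ as $(p^{s-1}\cdot\zero,\dots,p^{s-1}\cdot(\mathbf{p-1}))$) in more detail than the paper, which leaves those cases to the reader.
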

\begin{proof}
By Lemma \ref{lem:WsWs1}, we have that $(\mathbf{w}_i^{(s+1)},\mathbf{w}_i^{(s+1)},\dots,\mathbf{w}_i^{(s+1)})=p\mathbf{w}_i^{(s)}$ and $\ord(\mathbf{w}_i^{(s)})=\ord(\mathbf{w}_i^{(s+1)})=p^{\sigma_i}$. Let $\mathbf{v}_i^{(s+1)}$ be the vector over $\Z_{p^{s+1}}$ such that $\mathbf{w}_i^{(s+1)}=p\mathbf{v}_i^{(s+1)}$ and $\mathbf{w}_i^{(s)}=(\mathbf{v}_i^{(s+1)},\mathbf{v}_i^{(s+1)},\dots,\mathbf{v}_i^{(s+1)})$. Let $(\mathbf{v}_i^{(s+1)})_j$ be the $j$th coordinate of $\mathbf{v}_i^{(s+1)}$. By the definition of $\tilde{\tau}_{s+1}$ and Lemma \ref{lem:tau},  for $q_i\in\{0,\dots,\sigma_i-1\}$, we have that
\begin{equation*}
\begin{split}
&\tilde{\tau}_{s+1}(p^{q_i}\mathbf{w}^{(s+1)}_i)=\rho^{-1}(\tau_{s+1}(p^{q_i}\mathbf{w}^{(s+1)}_i))=\rho^{-1}(\tau_{s+1}(p^{q_i+1}\mathbf{v}_i^{(s+1)}))\\
&=\rho^{-1}(p^{q_i}((\mathbf{v}_i^{(s+1)})_1,\dots,(\mathbf{v}_i^{(s+1)})_1,\dots,(\mathbf{v}_i^{(s+1)})_n,\dots,(\mathbf{v}_i^{(s+1)})_n))\\
&=p^{q_i}(\mathbf{v}_i^{(s+1)},\mathbf{v}_i^{(s+1)},\dots,\mathbf{v}_i^{(s+1)})=p^{q_i}\mathbf{w}_i^{(s)},
\end{split}
\end{equation*}
and $(i)$ holds.

Since $\mathbf{w}_1^{(s)}=(\mathbf{w}_1^{(s+1)},\dots,\mathbf{w}_1^{(s+1)})=\one$ and $\mathbf{w}_{t_1+\cdots+t_s}^{(s)}=(p^{s-1} \cdot \zero,p^{s-1} \cdot \mathbf{1},\dots,p^{s-1}\cdot \mathbf{(p-1)})$, then the equalities in items $(ii)$ and $(iii)$ hold by the definition of $\tilde{\tau}_{s+1}$ and Lemma \ref{lem:tau}.
\end{proof}

Note that, from the previous proposition, we have that $\tilde{\tau}_{s+1}$ is a bijection between the $p$-bases, $\mathcal{B}^{t_1,\dots,t_s}$ and $\mathcal{B}^{1,t_1-1,\dots,t_{s-1},t_s-1}$.

\begin{example}
Let ${\mathcal H}^{2,1}$  and  ${\mathcal H}^{1,1,0}$ be the same codes considered
in Example \ref{ex:2bases}. The length of ${\mathcal H}^{1,1,0}$ is $n=9$.
Then, the extension of $\gamma_3= (2,4)(3,7)(6,8)$ $ \in \cS_9$ is
$\gamma_3=(2,4)(3,7)(6,8)(11,13)(12,16)(15,17)(20,22)\\(21,25)(24,26)(29,31)(30,34)(33,35)(38,40)(39,43)(42,44)(47,49)(48,52)\\(51,53)(56,58)(57,61)(60,62)(65,67)(66,70)(69,71)(74,76)(75,79)(78,80)\in \cS_{81}$, and

\begin{equation}
\tiny
\rho=\left(
  \begin{array}{cccccccccccccccccccccccccccccc}
    1 & 2& 3 & 4 & 5 & 6 & 7 & 8 &9    &10 &11 &12 &13 &14 &15 &16 &17 &18    &19 &20 &21 &22 &23 &24 &25 &26 &27 \\
    1 &4 &7  &10  &13  &16  &19     &22     &25              &2  &5  &8  &11   &14    &17   &20  &23    &26               &3 &6 &9  &12  &15  &18  &21   &24   &27    \\
  \end{array}
\right)  \in \cS_{27}.
\end{equation}
In this case, we have that
$$
\footnotesize
\arraycolsep=1pt\def\arraystretch{}
\begin{array}{lccclcc}
\Phi_3(1,1,1,1,1,1,1,1,1)&=&\gamma_3(\Phi_2(0,3,6,  \stackrel{9}{\dots},0,3,6))&=&\gamma_3(\Phi_2(\rho(\mathbf{0},\mathbf{3},\mathbf{6}))), \\
\Phi_3(3,3,3,3,3,3,3,3,3)&=&\gamma_3(\Phi_2(1,1,1,\stackrel{9}{\dots},1,1,1))&=&\gamma_3(\Phi_2(\rho(1,1,1,\stackrel{9}{\dots},1,1,1))), \\
\Phi_3(9,9,9,9,9,9,9,9,9)&=&\gamma_3(\Phi_2(3,3,3,\stackrel{9}{\dots},3,3,3))&=&\gamma_3(\Phi_2(\rho(3,3,3,\stackrel{9}{\dots},3,3,3))), \\
\Phi_3(0,3,6,9,12,15,18,21,24)&=&\gamma_3(\Phi_2(0,0,0,1,1,1,\dots,8,8,8))&=&\gamma_3(\Phi_2(\rho(u,\stackrel{3}{\dots},u)), \\
\Phi_3(0,9,18,0,9,18,0,9,18)&=&\gamma_3(\Phi_2(v, \stackrel{3}{\dots}, v))&=&\gamma_3(\Phi_2(\rho(0,3,6, \stackrel{9}{\dots},0,3,6))),
\end{array}
$$ where $u=(0,1,2,3,4,5,6,7,8)$ and $v=(0,0,0,3,3,3,6,6,6).$

Since $\Phi_3(\mathbf{u})=\gamma_3(\Phi_{2}(\rho( \tilde{\tau}_3(\mathbf{u})  )))$ for all $\mathbf{u}\in \Z_{27}^9$,
the map $\tilde{\tau}_3$ sends the elements of the $3$-basis $\mathcal{B}^{1,1,0}$ into the elements
of the $3$-basis $\mathcal{B}^{2,1}$.
That is, as it is shown in Proposition~\ref{prop:basis},
\begin{equation*}
\arraycolsep=1pt\def\arraystretch{}
\begin{array}{lcccccl}
\tilde{\tau}_3(\mathbf{w}_1^{(3)}) & = & \tilde{\tau}_3(1,1,1,1,1,1,1,1,1) & = & (\mathbf{0},\mathbf{3},\mathbf{6}) & = & \mathbf{w}_3^{(2)},\\
\tilde{\tau}_3(3\mathbf{w}_1^{(3)}) & = & \tilde{\tau}_3(3,3,3,3,3,3,3,3,3) & = & (1,1,1,\stackrel{9}{\dots},1,1,1) & = & \mathbf{w}_1^{(2)},\\
\tilde{\tau}_3(9\mathbf{w}_1^{(3)}) & = & \tilde{\tau}_3(9,9,9,9,9,9,9,9,9) & = & (3,3,3,\stackrel{9}{\dots},3,3,3) & = & 3\mathbf{w}_1^{(2)},\\
\tilde{\tau}_3(\mathbf{w}_2^{(3)}) & = & \tilde{\tau}_3(0,3,6,9,12,15,18,21,24) & = & (u,\stackrel{3}{\dots}, u) & = & \mathbf{w}_2^{(2)},\\
\tilde{\tau}_3(3\mathbf{w}_2^{(3)}) & = & \tilde{\tau}_3(0,9,18,0,9,18,0,9,18) & = & (0,3,6,\stackrel{9}{\dots},0,3,6) & = & 3\mathbf{w}_2^{(2)},\\
\end{array}
\end{equation*} 
so $\tilde{\tau}_3$ is a bijection between both $3$-bases.
\end{example}

\begin{lemma}\label{lemma:basisEquivcodes}
Let $\mathcal{H}_s=\mathcal{H}^{t_1,\ldots,t_s}$ be a $\Z_{p^s}$-additive GH code with $t_s\geq1$, and $\mathcal{H}_{s+1}=\mathcal{H}^{1,t_1-1,t_2,\ldots,t_{s-1},t_s-1}$ be a $\Z_{p^{s+1}}$-additive GH code.
Then, $H_s=\Phi_s(\mathcal{H}_s)$ is permutation equivalent to $H_{s+1}=\Phi_{s+1}(\mathcal{H}_{s+1})$.
\end{lemma}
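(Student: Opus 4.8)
The plan is to leverage the machinery built up in this section, especially Proposition~\ref{prop:basis} and the composition identity $\Phi_s(\mathbf{u})=\gamma_s(\Phi_{s-1}(\rho(\tilde{\tau}_s(\mathbf{u}))))$, to exhibit an explicit permutation of coordinates carrying $H_{s+1}$ onto $H_s$. Concretely, write $N=p^{s}n'$ for the common length of the two GH codes (here $n'$ is the length of $\mathcal{H}_{s+1}$ as a $\Z_{p^{s+1}}$-additive code and $pn'$ the length of $\mathcal{H}_s$ as a $\Z_{p^s}$-additive code), so that both $H_s$ and $H_{s+1}$ live in $\Z_p^{N}$. I would set $\pi = \gamma_{s+1}\circ \Phi_s \circ \rho \circ \Phi_s^{-1}$... more carefully: from the last displayed lemma, $\Phi_{s+1}(\mathbf{u})=\gamma_{s+1}(\Phi_{s}(\rho(\tilde{\tau}_{s+1}(\mathbf{u}))))$ for all $\mathbf{u}\in\Z_{p^{s+1}}^{n'}$, and since $\gamma_{s+1}$, $\rho$ act as coordinate permutations while $\Phi_s$, $\tilde\tau_{s+1}$ are "blockwise" Gray maps, the composite $\mathbf{u}\mapsto \rho(\tilde{\tau}_{s+1}(\mathbf{u}))$ maps $\Z_{p^{s+1}}^{n'}$ onto $\Z_{p^s}^{pn'}$ and the claim will reduce to showing $\rho(\tilde\tau_{s+1}(\mathcal{H}_{s+1}))=\mathcal{H}_s$, after which applying $\gamma_{s+1}\circ\Phi_s$ to both sides gives $H_{s+1}=\gamma_{s+1}(\Phi_s(\mathcal{H}_s))$, i.e. $H_{s+1}$ is the image of $H_s$ under the coordinate permutation $\gamma_{s+1}^{-1}\in\cS_N$, which is exactly permutation equivalence.

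So the heart of the argument is the identity $\tilde\tau_{s+1}(\mathcal{H}_{s+1})=\rho^{-1}(\mathcal{H}_s)$ (equivalently $\rho(\tilde\tau_{s+1}(\mathcal{H}_{s+1}))=\mathcal{H}_s$). First I would recall that $\mathcal{H}_{s+1}$ is generated over $\Z_{p^{s+1}}$ by the rows of $A^{1,t_1-1,\ldots,t_{s-1},t_s-1}$, so every codeword is a $\Z_{p^{s+1}}$-linear combination of these rows, and in fact (using the $p$-basis description stated just before Lemma~\ref{lem:WsWs1}) every codeword is a $p$-linear combination of the elements of $\mathcal{B}^{1,t_1-1,\ldots,t_{s-1},t_s-1}$, i.e. of the vectors $p^{q_i}\mathbf{w}_i^{(s+1)}$. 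Now $\tilde\tau_{s+1}$ is additive on the relevant sums: this is the content of Corollary~\ref{coro:formulaPhis} applied component-wise, which tells us $\tau_{s+1}$ (hence $\tilde\tau_{s+1}=\rho^{-1}\circ\tau_{s+1}$, since $\rho$ is a coordinate permutation and commutes with addition) respects $p$-linear combinations of $p^i$-multiples. Therefore $\tilde\tau_{s+1}$ sends a $p$-linear combination of $\mathcal{B}^{1,t_1-1,\ldots,t_{s-1},t_s-1}$ to the corresponding $p$-linear combination of the images; and by Proposition~\ref{prop:basis} these images are precisely the elements of $\mathcal{B}^{t_1,\ldots,t_s}$ (items (i)--(iii) together show $\tilde\tau_{s+1}$ is a bijection from one $p$-basis onto the other, as remarked after the proposition). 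Hence $\tilde\tau_{s+1}(\mathcal{H}_{s+1})$ equals the set of $p$-linear combinations of $\mathcal{B}^{t_1,\ldots,t_s}$, which is exactly $\mathcal{H}_s$ — wait, but $\tilde\tau_{s+1}$ already absorbed the $\rho^{-1}$, so one must be careful about whether the target is $\mathcal{H}_s$ or $\rho^{-1}(\mathcal{H}_s)$; I would track this bookkeeping by noting Proposition~\ref{prop:basis} is literally stated for $\tilde\tau_{s+1}$ landing in $\mathcal{H}^{t_1,\ldots,t_s}$, so in fact $\tilde\tau_{s+1}(\mathcal{H}_{s+1})=\mathcal{H}_s$ directly, and then the lemma before this one gives $\Phi_{s+1}(\mathcal{H}_{s+1})=\gamma_{s+1}(\Phi_s(\rho(\tilde\tau_{s+1}(\mathcal{H}_{s+1}))))$; the $\rho$ here is harmless since applying $\Phi_s$ after $\rho$ still produces a coordinate-permuted copy of $\Phi_s(\mathcal{H}_s)$, so $H_{s+1}$ is a coordinate permutation of $H_s$.

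To make the last point precise I would observe that $\rho$ permutes the $pn'$ coordinates of $\Z_{p^s}^{pn'}$, so $\Phi_s\circ\rho = \pi'\circ\Phi_s$ for the induced coordinate permutation $\pi'\in\cS_{p^{s-1}\cdot pn'}=\cS_N$ that permutes the length-$p^{s-1}$ Gray-image blocks according to $\rho$; combined with the outer $\gamma_{s+1}\in\cS_N$ this yields $H_{s+1}=(\gamma_{s+1}\circ\pi')(\Phi_s(\mathcal{H}_s))=(\gamma_{s+1}\circ\pi')(H_s)$, establishing permutation equivalence. I would also double-check the degenerate constraint $t_s\geq 1$ is used exactly where Lemma~\ref{lem:WsWs1} and Proposition~\ref{prop:basis} need it (to guarantee the last row $\mathbf{w}_{t_1+\cdots+t_s}^{(s)}$ has order $p$ and matches $\tilde\tau_{s+1}(\mathbf{w}_1^{(s+1)})$), and that the type arithmetic $t=(\sum (s-i+1)t_i)-1$ is consistent between $\mathcal{H}^{t_1,\ldots,t_s}$ and $\mathcal{H}^{1,t_1-1,\ldots,t_{s-1},t_s-1}$ so that both GH codes have the same ambient length $p^t$. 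The main obstacle is not conceptual but the careful bookkeeping of the three permutations ($\gamma_{s+1}$, $\rho$, and the block-permutation they induce on $\Z_p^N$) and of where $\tau_{s+1}$ versus $\tilde\tau_{s+1}$ is being used, so that the additivity statement (Corollary~\ref{coro:formulaPhis}) and the basis-image statement (Proposition~\ref{prop:basis}) can be chained without an off-by-a-permutation error; everything else is a direct consequence of results already proved in this section.
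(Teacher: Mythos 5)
Your proposal is correct and follows essentially the same route as the paper's proof: establish $\tilde\tau_{s+1}(\mathcal{H}_{s+1})=\mathcal{H}_s$ from the basis bijection of Proposition~\ref{prop:basis} together with the additivity from Corollary~\ref{coro:formulaPhis}, then unwind $\Phi_{s+1}=\gamma_{s+1}\circ\Phi_s\circ\rho\circ\tilde\tau_{s+1}$ and lift $\rho$ to the induced block permutation (your $\pi'$ is the paper's $\rho_*$) to get $H_{s+1}=(\gamma_{s+1}\circ\rho_*)(H_s)$. The bookkeeping concerns you flag are resolved correctly in your final paragraphs, matching the paper's argument.
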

\begin{proof}
Let $t$ be the integer such that $H_s$ and $H_{s+1}$ are of length $p^t$.
Let $\mathcal{B}_s=\{\bv_1^{(s)},\dots,\bv_{t+1}^{(s)}\}$ and $\mathcal{B}_{s+1}=\{\bv_1^{(s+1)},\dots,\bv_{t+1}^{(s+1)}\}$ be the $p$-basis of $\mathcal{H}_s$ and  $\mathcal{H}_{s+1}$, respectively.
By Proposition \ref{prop:basis}, $\tilde{\tau}_{s+1}$ is a bijection between $\mathcal{B}_s$ and $\mathcal{B}_{s+1}$. By the definition of
$\tilde{\tau}_{s+1}$ and Corollary \ref{coro:formulaPhis}, we have that $\tilde{\tau}_{s+1}$ commutes with the addition, so $\tilde{\tau}_{s+1}(\mathcal{H}_{s+1})=\mathcal{H}_{s}$.

  Let $\rho_*\in\mathcal{S}_{p^t}$ be a permutation such that $\Phi_s(\rho(\mathbf{u}))=\rho_*(\Phi_s(\mathbf{u}))$ for all $\mathbf{u}\in\mathcal{H}_{s}$. Since $\mathcal{H}_s=\tilde{\tau}_{s+1}(\mathcal{H}_{s+1})=\rho^{-1}(\Phi_{s}^{-1}(\gamma^{-1}_{s+1}(\Phi_{s+1}(\mathcal{H}_{s+1}))))$, we have that $\Phi_{s}(\mathcal{H}_s)=\rho_*^{-1}(\gamma^{-1}_{s+1}(\Phi_{s+1}(\mathcal{H}_{s+1})))$. Therefore, we obtain $H_s=(\gamma_{s+1}\circ\rho_*)^{-1}(H_{s+1})$, where $\gamma_{s+1}\circ\rho_*\in\mathcal{S}_{p^t}$.
\end{proof}

The following theorem determines which $\Z_{p^{s'}}$-linear GH codes
are equivalent to a given $\Z_{p^s}$-linear Hadamard code
$H^{t_1,\dots,t_s}$. We denote by $\zero^j$ the all-zero vector of length
$j$. Let $\sigma$ be the integer such that $\ord(\mathbf{w}_2^{(s)})=p^{s+1-\sigma}$. Note that $\sigma =1$ if and only if $t_1\geq 2$. Moreover, since $\sigma_2$ is the integer such that $\ord(\mathbf{w}_2^{(s)})=p^{\sigma_2}$, we have that $\sigma=s+1-\sigma_2$.

\begin{theorem}\label{theo:equi}
Let $H^{t_1,\dots,t_s}$ be a $\Z_{2^s}$-linear GH code with $t_s\geq 1$.
Then, $H^{t_1,\dots,t_s}$  is equivalent to the
$\Z_{p^{s+\ell}}$-linear Hadamard code
$H^{1,\zero^{\ell-1},t_1-1,t_2,\dots,t_{s-1},t_s-\ell}$, for all
$\ell\in\{1,\dots,t_s\}$.
\end{theorem}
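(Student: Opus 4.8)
The plan is to prove this by induction on $\ell$, using Lemma~\ref{lemma:basisEquivcodes} as the base step and the engine of the induction. For $\ell=1$, the claim is precisely Lemma~\ref{lemma:basisEquivcodes}: it says that $H^{t_1,\dots,t_s}$, with $t_s\geq 1$, is permutation equivalent to $H^{1,t_1-1,t_2,\dots,t_{s-1},t_s-1}$, and the latter code is exactly $H^{1,\zero^{0},t_1-1,t_2,\dots,t_{s-1},t_s-1}$ as a $\Z_{p^{s+1}}$-linear code. So the base case is immediate.

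For the inductive step, suppose the statement holds for some $\ell$ with $1\le \ell < t_s$, so that $H^{t_1,\dots,t_s}$ is permutation equivalent to the $\Z_{p^{s+\ell}}$-linear GH code $H' = H^{1,\zero^{\ell-1},t_1-1,t_2,\dots,t_{s-1},t_s-\ell}$. I would now apply Lemma~\ref{lemma:basisEquivcodes} to $H'$ itself. Writing $H'=H^{t_1',t_2',\dots,t_{s+\ell}'}$ with $t_1'=1$, $t_2'=\cdots=t_\ell'=0$, $t_{\ell+1}'=t_1-1$, $t_{\ell+2}'=t_2,\dots$, and last coordinate $t_{s+\ell}'=t_s-\ell$, the hypothesis $\ell<t_s$ guarantees $t_{s+\ell}'=t_s-\ell\geq 1$, so Lemma~\ref{lemma:basisEquivcodes} applies and tells us $H'$ is permutation equivalent to the $\Z_{p^{s+\ell+1}}$-linear GH code $H'' = H^{1,\,t_1'-1,\,t_2',\dots,t_{s+\ell-1}',\,t_{s+\ell}'-1}$. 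The main thing to check here is purely bookkeeping: substituting $t_1'-1=0$, $t_2'=\cdots=t_\ell'=0$, $t_{\ell+1}'=t_1-1$, etc., the parameter vector of $H''$ becomes $(1,\zero^{\ell-1},0,t_1-1,t_2,\dots,t_{s-1},t_s-\ell-1) = (1,\zero^{\ell},t_1-1,t_2,\dots,t_{s-1},t_s-(\ell+1))$, which is exactly the claimed parameter vector for $\ell+1$. Since permutation equivalence is transitive (and in particular implies equivalence), composing with the equivalence from the inductive hypothesis gives that $H^{t_1,\dots,t_s}$ is equivalent to $H''$, completing the induction.

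The only genuinely delicate point is making sure the recursive index shifts line up correctly — that the block of $\ell-1$ zeros produced at stage $\ell$ really grows to a block of $\ell$ zeros at stage $\ell+1$ when one more leading row of order $p^{s+\ell}$ is "split off", and that the final entry decreases by exactly one each time. This is controlled entirely by the shape of the transformation $\mathcal{H}^{t_1,\dots,t_s}\mapsto\mathcal{H}^{1,t_1-1,t_2,\dots,t_{s-1},t_s-1}$ in Lemma~\ref{lemma:basisEquivcodes} and the recursive construction of $A^{t_1,\dots,t_s}$ from Section~\ref{Sec:construction}; one should verify in particular that at each stage the hypothesis "last parameter $\geq 1$" needed to invoke the lemma holds, which is exactly where the restriction $\ell\in\{1,\dots,t_s\}$ enters (at stage $\ell$ the last parameter is $t_s-\ell+1\geq 1$). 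I do not expect any analytic difficulty; the work is entirely in verifying that the chain of permutation equivalences $H^{t_1,\dots,t_s}\sim H'\sim H''\sim\cdots$ produced by iterating Lemma~\ref{lemma:basisEquivcodes} has the parameter vectors asserted in the statement, and then invoking transitivity of equivalence.
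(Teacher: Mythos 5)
Your proposal is correct and is essentially the paper's own argument: the paper also iterates Lemma~\ref{lemma:basisEquivcodes} to build the chain $H_0\sim H_1\sim\cdots\sim H_\ell$ and concludes by transitivity of permutation equivalence, which is exactly your induction on $\ell$. Your bookkeeping of the parameter vector (the leading $1$, the block of zeros growing by one, and the last entry decreasing by one at each step) matches the definition of $\mathcal{H}_\ell=\mathcal{H}^{1,\zero^{\ell-1},t_1-1,t_2,\dots,t_{s-1},t_s-\ell}$ used in the paper.
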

\begin{proof}
Consider $\mathcal{H}_0=\mathcal{H}^{t_1,\dots,t_s}$ and $\mathcal{H}_\ell=\mathcal{H}^{1,\zero^{\ell-1},t_1-1,t_2,\dots,t_{s-1},t_s-\ell}$ for $\ell\in\{1,\dots,t_s\}$. By Lemma \ref{lemma:basisEquivcodes}, we have that $H_i=\Phi(\mathcal{H}_i)$ is permutation equivalent to $H_{i+1}=\Phi(\mathcal{H}_{i+1})$ for all $i\in\{0,\dots,\ell-1\}$ and $\ell\in\{1,\dots,t_s\}$. Therefore, we have that $H_0$ and $H_{\ell}$ are permutation equivalent for all $\ell\in\{1,\dots,t_s\}$.
\end{proof}

 \medskip
Let $t_1, t_2,\dots,t_s$ be nonnegative integers with $t_1\geq 2$. Let $C_H(t_1,\dots,t_s)=[H_1=H^{t_1,\dots,t_s},H_2,\dots, H_{t_s+1}]$ be the sequence of all $\Z_{p^{s'}}$-linear GH codes of length $p^t$, where $t=$ $\left(\sum_{i=1}^{s'}(s'-i+1)\cdot t_i\right)-1$, that are permutation equivalent to $H^{t_1,\dots,t_s}$ by Theorem \ref{theo:equi}.
We denote by $C_H(t_1,\dots,t_s)[i]$ the $i$th code $H_i$ in the sequence, for $1\leq i\leq t_s+1$. We consider that the order of the codes in $C_H(t_1,\dots,t_s)$ is the following:
\begin{equation}\label{eq:defCH}
	C_H(t_1,\dots,t_s)[i]=
		\begin{cases}
	H^{t_1,\dots,t_{s}},& \text{if } i= 1,\\
	H^{1,\zero^{i-2},t_1-1,t_2,\dots,t_{s-1},t_s-i+1},              & \text{otherwise.}
\end{cases}
\end{equation}
 We refer to $C_H(t_1,\dots,t_s)$ as the {\it chain of equivalences} of $H^{t_1,\dots,t_s}$. Note that if $t_s=0$, then $C_H(t_1,\dots,t_s)=[H^{t_1,\dots,t_s}]$.

\begin{corollary}\label{lemma:CH}
	Let $t_1, t_2,\dots,t_s$ be nonnegative integers with $t_1\geq 2$. Then,
	$$|C_H(t_1,\dots,t_s)|=t_s+1.$$
\end{corollary}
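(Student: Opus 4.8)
The statement is a bookkeeping fact about the sequence introduced in~(\ref{eq:defCH}): the plan is to read off its length directly, after separating the degenerate case. First I would dispose of $t_s=0$: by the convention stated immediately after~(\ref{eq:defCH}), $C_H(t_1,\dots,t_s)=[H^{t_1,\dots,t_s}]$ has a single entry, so $|C_H(t_1,\dots,t_s)|=1=t_s+1$.

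For $t_s\geq 1$ I would argue in two short steps. The first step checks that every code appearing in~(\ref{eq:defCH}) is a legitimate member of the family: for each $\ell\in\{1,\dots,t_s\}$ the tuple $(1,\zero^{\ell-1},t_1-1,t_2,\dots,t_{s-1},t_s-\ell)$ has length $s+\ell$, leading entry $1\geq 1$, and only nonnegative entries, the two nontrivial inequalities being $t_1-1\geq 0$ (using $t_1\geq 2$) and $t_s-\ell\geq 0$ (using $\ell\leq t_s$). The second step invokes Theorem~\ref{theo:equi}, whose hypothesis $t_s\geq 1$ is met: it shows that each $H^{1,\zero^{\ell-1},t_1-1,t_2,\dots,t_{s-1},t_s-\ell}$ is permutation equivalent to $H^{t_1,\dots,t_s}$ and hence of the same length $p^t$, so it genuinely belongs to the chain of equivalences of $H^{t_1,\dots,t_s}$ in the sense used to define~(\ref{eq:defCH}).

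It then remains to count. Identifying the parameter $\ell$ of Theorem~\ref{theo:equi} with the index $i$ of~(\ref{eq:defCH}) through $\ell=i-1$, the entries of $C_H(t_1,\dots,t_s)$ are $H^{t_1,\dots,t_s}$ (at $i=1$) together with exactly one code for each $\ell\in\{1,\dots,t_s\}$ (at $i=2,\dots,t_s+1$), so the sequence has precisely $1+t_s$ terms, which is the claim. One may note that these terms are pairwise distinct, being $\Z_{p^{s'}}$-linear for the $t_s+1$ distinct values $s'=s,s+1,\dots,s+t_s$, but for the asserted cardinality (the length of the sequence) this is not needed. I do not expect any genuine obstacle: the only care required lies in the index shift $\ell=i-1$ and in the boundary situations $t_s=0$ and $\ell=t_s$, where the last component of the type vector degenerates to $0$, still a valid type vector.
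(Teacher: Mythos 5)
Your proof is correct and follows the same route the paper takes implicitly: the corollary is stated without proof because it is an immediate count of the entries in the defining sequence~(\ref{eq:defCH}), namely the code at $i=1$ plus one code for each $\ell=i-1\in\{1,\dots,t_s\}$, giving $t_s+1$ terms (with the $t_s=0$ convention handled separately). Your additional checks — that each type vector $(1,\zero^{\ell-1},t_1-1,t_2,\dots,t_{s-1},t_s-\ell)$ is a valid nonnegative tuple of length $s+\ell$ and that Theorem~\ref{theo:equi} applies — are exactly the right details to make the bookkeeping airtight.
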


\begin{example}\label{examples:chain1}
 Let $p=3$.	Then the chain of equivalences of $H^{3,3}$ is the sequence $C_H(3,3)=[H^{3,3}, H^{1,2,2}, H^{1,0,2,1}, H^{1,0,0,2,0}]$ and contains exactly $t_2+1=4$ codes. Note that this sequence contains all the codes of length $3^8$ having the pair $(r,k)=(14,6)$ in Table \ref{table:TypesB2}. In the  table \ref{table:Types}, we can see that there is only one code of length $3^7$ having the pair $(r,k)=(25,3)$, named $H^{2,1,0}$. The chain of equivalences of this code is $C_H(2,1,0)=[H^{2,1,0}]$, which contains just this code, since $t_3+1=1$.
\end{example}

\begin{proposition}\label{lemma:pos-i-in-CH}
	Let $t_1,t_2,\dots,t_s$ be nonnegative integers with $t_1\geq 2$. Then, the  $\Z_{p^{s'}}$-linear Hadamard code
	$H^{t'_1,\dots,t'_{s'}}= C_H(t_1,\dots,t_s)[i]$, $1\leq i\leq t_s+1$, satisfies
	\begin{enumerate}[label=(\roman*)]
		\item $s'=s+i-1$,
		\item $\sigma'=i$,
		\item $t'_{s'}=t_s-i+1$,
		\item $
		(t'_1,\dots,t'_{s'})=
		\begin{cases}
		(t_1,\dots,t_{s}),& \text{if } i= 1,\\
		(1,\zero^{i-2},t_1-1,t_2,\dots,t_{s-1},t_s-i+1),              & \text{otherwise.}
		\end{cases}
		$	
	\end{enumerate}
\end{proposition}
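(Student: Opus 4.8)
The plan is to read off all four items directly from the definition of the chain of equivalences in (\ref{eq:defCH}), splitting into the case $i=1$ and the generic case $2\leq i\leq t_s+1$; in each case it suffices to count the entries of the displayed tuple $(t'_1,\dots,t'_{s'})$ and to locate its first nonzero coordinate of index at least $2$, the latter being what controls $\sigma'$.

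For $i=1$ we have $H^{t'_1,\dots,t'_{s'}}=H^{t_1,\dots,t_s}$ by (\ref{eq:defCH}), so $s'=s$, the tuple equals $(t_1,\dots,t_s)$, and $t'_{s'}=t_s=t_s-i+1$; this gives (i), (iii) and (iv). Since here $t'_1=t_1\geq 2$, the discussion preceding Theorem~\ref{theo:equi} (namely, $\sigma=1$ if and only if $t_1\geq 2$) yields $\sigma'=1=i$, which is (ii).

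For $2\leq i\leq t_s+1$, by (\ref{eq:defCH}) the tuple is $(t'_1,\dots,t'_{s'})=(1,\zero^{i-2},t_1-1,t_2,\dots,t_{s-1},t_s-i+1)$, which is exactly (iv). Counting its entries---one leading $1$, then $i-2$ zeros, then $t_1-1$, then the $s-2$ values $t_2,\dots,t_{s-1}$, then $t_s-i+1$---gives $s'=1+(i-2)+1+(s-2)+1=s+i-1$, proving (i); and its last entry is $t_s-i+1$, proving (iii). For (ii), recall $\sigma'$ is the integer with $\ord(\mathbf{w}_2^{(s')})=p^{s'+1-\sigma'}$. Here I would invoke the recursive construction of $A^{t'_1,\dots,t'_{s'}}$ described after (\ref{eq:recGenMatrix}): starting from $A^{1,0,\dots,0}$ one appends $t'_1-1=0$ rows of order $p^{s'}$, then $t'_2=\dots=t'_{i-1}=0$ rows of orders $p^{s'-1},\dots,p^{s'-i+2}$, and then $t'_i=t_1-1\geq 1$ rows of order $p^{s'-i+1}$. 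Hence the second row $\mathbf{w}_2^{(s')}$ (the first row appended after $\mathbf{w}_1^{(s')}=\one$) has order $p^{s'-i+1}=p^{s'+1-i}$, so $\sigma'=i$, proving (ii).

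The argument is pure bookkeeping; the only step needing care is the index count identifying $\ord(\mathbf{w}_2^{(s')})$, i.e.\ verifying that the first nonzero $t'_j$ with $j\geq 2$ occurs at $j=i$, together with the degenerate cases $i=2$ (empty block of zeros), $t_1=2$ (so $t_1-1=1$), and $s\leq 2$ (empty block $t_2,\dots,t_{s-1}$), in all of which the formulas remain valid. An alternative is induction on $i$: one checks that $C_H(t_1,\dots,t_s)[i+1]$ is obtained from $C_H(t_1,\dots,t_s)[i]$ by a single application of the transformation $(u_1,\dots,u_r)\mapsto(1,u_1-1,u_2,\dots,u_{r-1},u_r-1)$ appearing in Lemma~\ref{lemma:basisEquivcodes}, and uses Lemma~\ref{lem:WsWs1} to propagate the order of the second row; but the direct count above is shorter.
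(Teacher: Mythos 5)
Your proof is correct and follows the same route as the paper, which simply declares the proposition ``straightforward from Theorem~\ref{theo:equi} and the definition of the chain of equivalences''; you have merely filled in the entry count giving $s'=s+i-1$ and the observation that the first nonzero $t'_j$ with $j\geq 2$ sits at $j=i$ (using $t_1\geq 2$), which pins down $\ord(\mathbf{w}_2^{(s')})=p^{s'+1-i}$ and hence $\sigma'=i$.
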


\begin{proof}
Straightforward from Theorem \ref{theo:equi} and the definition of the
chain of equivalences $C_H(t_1,\dots,t_s)$.
\end{proof}

Note that the value of $s'$ is different for every code $H^{t'_1,\dots,t'_{s'}}$ 
belonging to  the same chain of equivalences $C_H(t_1,\dots,t_s)$.

\begin{corollary}\label{coro:One-ts0}
	Let $t_1, t_2,\dots,t_s$ be nonnegative integers with $t_1\geq 2$.
	\begin{enumerate}[label=(\roman*)]
		\item 	The $\Z_{p^{s'}}$-linear Hadamard code $H^{t'_1,\dots,t'_{s'}}= C_H(t_1,\dots,t_s)[1]$ is the only one in $C_H(t_1,\dots,t_s)$ with $\sigma'=1$, that is, the only one with $t'_1\geq 2$.
		\item	The $\Z_{p^{s'}}$-linear Hadamard code $H^{t'_1,\dots,t'_{s'}}= C_H(t_1,\dots,t_s)[t_s+1]$ is the only one in $C_H(t_1,\dots,t_s)$ with $t'_{s'}=0$.	
	\end{enumerate}
\end{corollary}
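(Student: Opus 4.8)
The plan is to deduce both statements directly from Proposition~\ref{lemma:pos-i-in-CH}, which already records the type parameters of the $i$th code in the chain $C_H(t_1,\dots,t_s)$ as explicit functions of $i$. The only extra ingredient I would invoke is the observation made in the paragraph preceding Theorem~\ref{theo:equi}: for any $\Z_{p^{s'}}$-linear GH code $H^{t'_1,\dots,t'_{s'}}$ one has $\sigma'=1$ if and only if $t'_1\geq 2$. With that in hand the corollary is a matter of reading off values.

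For part (i): let $H^{t'_1,\dots,t'_{s'}}=C_H(t_1,\dots,t_s)[i]$ with $1\leq i\leq t_s+1$. By Proposition~\ref{lemma:pos-i-in-CH}(ii) we have $\sigma'=i$, so $\sigma'=1$ holds precisely when $i=1$. Hence $C_H(t_1,\dots,t_s)[1]$ is the unique code in the chain with $\sigma'=1$, and by the equivalence $\sigma'=1\iff t'_1\geq 2$ it is also the unique one with $t'_1\geq 2$. For part (ii): with the same notation, Proposition~\ref{lemma:pos-i-in-CH}(iii) gives $t'_{s'}=t_s-i+1$; as $i$ runs over $\{1,\dots,t_s+1\}$ this quantity is nonnegative and equals $0$ exactly when $i=t_s+1$, so $C_H(t_1,\dots,t_s)[t_s+1]$ is the unique code in the chain with last coordinate $0$.

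I do not expect a genuine obstacle: once Proposition~\ref{lemma:pos-i-in-CH} is available, both $\sigma'$ and $t'_{s'}$ are affine—hence injective—in the index $i$ on the range $\{1,\dots,t_s+1\}$, and uniqueness follows immediately. The only point needing a word of care is the degenerate case $t_s=0$, where $C_H(t_1,\dots,t_s)=[H^{t_1,\dots,t_s}]$ has a single entry; then items (i) and (ii) refer to the same code, which indeed satisfies $t_1\geq 2$ and $t_s=0$, so the statement remains consistent.
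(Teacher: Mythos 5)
Your proof is correct and matches the paper's intent: the corollary is stated without proof precisely because it is an immediate read-off from Proposition~\ref{lemma:pos-i-in-CH} (items (ii) and (iii)) together with the remark before Theorem~\ref{theo:equi} that $\sigma=1$ if and only if $t_1\geq 2$. Your handling of the degenerate case $t_s=0$ is a reasonable extra check but not needed beyond what the proposition already gives.
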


Now, given any $\Z_{p^s}$-linear Hadamard code $H^{t_1,\dots,t_s}$, we determine the chain of equivalences containing this code, as well as its position in the sequence. Therefore, note that indeed we prove that any code $H^{t_1,\dots,t_s}$ (with $t_1\geq 1$) belongs to a unique chain of equivalences
$C_H(t'_1,\dots,t'_{s'})$ with $t'_1 \geq 2$.

\begin{proposition}\label{lemma:codeInChain}
Let $t_1, t_2,\dots,t_s$ be nonnegative integers with $t_1\geq 1$. The $\Z_{p^{s}}$-linear Hadamard code $H^{t_1,\dots,t_{s}}$ belongs to
an unique chain of equivalences.
If $t_1\geq 2$, then $\sigma=1$ and $H^{t_1,\dots,t_s}=C_H(t_1,\dots,t_s)[1]$. Otherwise, if $t_1=1$, then  $\sigma >1$ and $H^{t_1,\dots,t_s}=C_H(t'_1,\dots,t'_{s'})[\sigma]$, where $(t'_1,\dots,t'_{s'})= (t_\sigma+1,t_{\sigma+1},\dots,t_{s-1},t_s+\sigma-1)$ and $s'=s-\sigma+1$.
\end{proposition}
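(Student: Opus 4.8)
The plan is to treat the two cases $t_1 \ge 2$ and $t_1 = 1$ separately, since Corollary~\ref{coro:One-ts0} already tells us that a chain $C_H(t'_1,\dots,t'_{s'})$ with $t'_1 \ge 2$ contains exactly one code with first parameter $\ge 2$, namely its first term. So if $t_1 \ge 2$, then $H^{t_1,\dots,t_s}$ can only be the first term of its own chain $C_H(t_1,\dots,t_s)$; this also shows $\sigma = 1$ by the remark preceding Theorem~\ref{theo:equi} (since $\sigma = 1 \iff t_1 \ge 2$), and uniqueness follows because any other chain $C_H(t'_1,\dots,t'_{s'})$ to which $H^{t_1,\dots,t_s}$ could belong in position $i$ would force, by Proposition~\ref{lemma:pos-i-in-CH}(ii), that $\sigma' = i$ where $\sigma'$ is the invariant of $H^{t_1,\dots,t_s}$; but that invariant is $1$, so $i = 1$, and then Proposition~\ref{lemma:pos-i-in-CH}(iv) forces $(t'_1,\dots,t'_{s'}) = (t_1,\dots,t_s)$.

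For the case $t_1 = 1$, I would first argue $\sigma > 1$: since $t_1 = 1$ is not $\ge 2$, the remark before Theorem~\ref{theo:equi} gives $\sigma \ne 1$, hence $\sigma \ge 2$. Now define $(t'_1,\dots,t'_{s'})$ by the stated formula $(t'_1,\dots,t'_{s'}) = (t_\sigma+1, t_{\sigma+1},\dots,t_{s-1}, t_s+\sigma-1)$ with $s' = s-\sigma+1$, and check that $t'_1 = t_\sigma + 1 \ge 2$ (this needs $t_\sigma \ge 1$, which holds because $\sigma$ is defined via $\ord(\mathbf{w}_2^{(s)}) = p^{s+1-\sigma}$, equivalently $\sigma_2 = s+1-\sigma$, and the construction of $A^{t_1,\dots,t_s}$ adds rows of order $p^{s-j+1}$ in blocks of size $t_{j}$, so the order of the second row being $p^{\sigma_2}$ precisely pins down that $t_j = 0$ for the relevant range and $t_\sigma \ge 1$). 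Then I would invoke Proposition~\ref{lemma:pos-i-in-CH} with this $(t'_1,\dots,t'_{s'})$ and $i = \sigma$: part (iv) of that proposition, applied with $i = \sigma \ge 2$, yields that $C_H(t'_1,\dots,t'_{s'})[\sigma] = H^{1,\zero^{\sigma-2}, t'_1-1, t'_2,\dots,t'_{s'-1}, t'_{s'}-\sigma+1}$, and substituting the formula for the $t'_j$ this equals $H^{1,\zero^{\sigma-2}, t_\sigma, t_{\sigma+1},\dots,t_{s-1}, t_s}$. It remains to recognize that this superindex is exactly $(t_1,\dots,t_s)$: indeed $t_1 = 1$, and $t_2 = \cdots = t_{\sigma-1} = 0$ because $\sigma_2 = s+1-\sigma$ forces precisely those vanishings in the iterative construction, so $(1,\zero^{\sigma-2}, t_\sigma,\dots,t_s) = (t_1, t_2,\dots,t_{\sigma-1}, t_\sigma,\dots,t_s)$.

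Finally, for uniqueness in the $t_1 = 1$ case: suppose $H^{t_1,\dots,t_s}$ also lies in some chain $C_H(t''_1,\dots,t''_{s''})$ with $t''_1 \ge 2$, say in position $j$. By Proposition~\ref{lemma:pos-i-in-CH}(ii) the invariant $\sigma$ of $H^{t_1,\dots,t_s}$ equals $j$, so $j = \sigma$; and then part (iv) (or part (i): $s'' = s''_{\text{chain}} + j - 1$ combined with $s' = s''$... rather, the length $p^t$ is fixed, so $s''$ and the $t''$'s are determined) forces $(t''_1,\dots,t''_{s''})$ to coincide with the $(t'_1,\dots,t'_{s'})$ constructed above. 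The main obstacle I anticipate is the bookkeeping in this last identification step — namely justifying carefully, from the precise recursive construction of $A^{t_1,\dots,t_s}$ described before Lemma~\ref{lem:WsWs1}, both that $t_\sigma \ge 1$ and that $t_2 = \cdots = t_{\sigma-1} = 0$ when $\ord(\mathbf{w}_2^{(s)}) = p^{s+1-\sigma}$; everything else is a direct application of Propositions~\ref{lemma:pos-i-in-CH} and Corollary~\ref{coro:One-ts0}.
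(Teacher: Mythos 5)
Your proposal is correct and follows essentially the same route as the paper's proof: both cases are handled by matching the type $(t_1,\dots,t_s)$ against the form given in Proposition~\ref{lemma:pos-i-in-CH}(iv) at position $i=\sigma$, with Corollary~\ref{coro:One-ts0} settling the case $t_1\geq 2$. You simply make explicit two points the paper leaves implicit — that $t_2=\cdots=t_{\sigma-1}=0$ and $t_\sigma\geq 1$ follow from the definition of $\sigma$ via the recursive construction, and the uniqueness argument via Proposition~\ref{lemma:pos-i-in-CH}(ii) — which is a faithful elaboration rather than a different proof.
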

\begin{proof}
If $t_1\geq 2$, it is clear by the definition of chain of equivalences and Corollary \ref{coro:One-ts0}.
If $t_1=1$, then $\sigma >1$. By Proposition \ref{lemma:pos-i-in-CH}, since $H^{t_1,\dots,t_s}=C_H(t'_1,\dots,t'_{s'})[\sigma]$, we have that $(t_1,\dots,t_s)=(1,\zero^{\sigma-2},t'_1-1, t'_2, \ldots,t'_{s'-1}, t'_{s'}-\sigma+1)$. Therefore, $t_\sigma=t'_1-1$, $t_{\sigma+1}=t'_2$, $\ldots$, $t_{s-1}=t'_{s'-1}$, $t_s=t'_{s'}-\sigma+1$, and the result follows.
\end{proof}

\begin{corollary}\label{coro:count-CH}
Let $H^{t_1,\dots,t_{s}}$ be a $\Z_{p^{s}}$-linear Hadamard code and $C_H(t'_1,\dots,t'_{s'})$ the chain of equivalences such that $H^{t_1,\dots,t_{s}}=C_H(t'_1,\dots,t'_{s'})[\sigma]$. Then,
$$	|C_H(t'_1,\dots,t'_{s'})|=t_s+\sigma.$$
\end{corollary}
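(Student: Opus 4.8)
The plan is to combine the two previous structural results: Proposition~\ref{lemma:codeInChain}, which tells us exactly where the code $H^{t_1,\dots,t_s}$ sits inside its unique chain, and Corollary~\ref{lemma:CH}, which gives the length of a chain in terms of the last exponent of its first code. Concretely, Proposition~\ref{lemma:codeInChain} asserts that $H^{t_1,\dots,t_s}=C_H(t'_1,\dots,t'_{s'})[\sigma]$ with $(t'_1,\dots,t'_{s'})=(t_\sigma+1,t_{\sigma+1},\dots,t_{s-1},t_s+\sigma-1)$, so in particular $t'_{s'}=t_s+\sigma-1$. Corollary~\ref{lemma:CH} (applied to the chain generated by $H^{t'_1,\dots,t'_{s'}}$, whose first entry has $t'_1=t_\sigma+1\geq 2$) gives $|C_H(t'_1,\dots,t'_{s'})|=t'_{s'}+1$. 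Substituting the value of $t'_{s'}$ yields $|C_H(t'_1,\dots,t'_{s'})|=(t_s+\sigma-1)+1=t_s+\sigma$, which is exactly the claim.

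First I would split into the two cases already singled out in Proposition~\ref{lemma:codeInChain}. If $t_1\geq 2$, then $\sigma=1$ and $H^{t_1,\dots,t_s}=C_H(t_1,\dots,t_s)[1]$, so $(t'_1,\dots,t'_{s'})=(t_1,\dots,t_s)$ and Corollary~\ref{lemma:CH} directly gives $|C_H(t'_1,\dots,t'_{s'})|=t_s+1=t_s+\sigma$. If $t_1=1$, then $\sigma>1$ and I would read off $(t'_1,\dots,t'_{s'})$ from Proposition~\ref{lemma:codeInChain}, check that $t'_1=t_\sigma+1\geq 2$ so that Corollary~\ref{lemma:CH} applies, note that its last coordinate is $t'_{s'}=t_s+\sigma-1$, and conclude $|C_H(t'_1,\dots,t'_{s'})|=t'_{s'}+1=t_s+\sigma$. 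Both cases can in fact be written uniformly, since $t'_{s'}=t_s+\sigma-1$ holds in the case $\sigma=1$ as well; but presenting the two cases makes the bookkeeping transparent.

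There is essentially no hard analytic step here: the statement is a direct corollary, and the only thing to be careful about is the indexing arithmetic, namely that the ``last exponent'' $t'_{s'}$ of the generating code of the chain equals $t_s+\sigma-1$ and that Corollary~\ref{lemma:CH} is legitimately applicable (i.e. its hypothesis $t'_1\geq 2$ is met). The mild obstacle, if any, is simply making sure the two descriptions of the chain — the one in the definition~\eqref{eq:defCH} and the one obtained by reindexing in Proposition~\ref{lemma:codeInChain} — are matched up correctly so that the off-by-one in $t_s+\sigma$ versus $t_s+\sigma-1+1$ comes out right. Once that is checked, the proof is a one-line substitution, and I would write it as such.
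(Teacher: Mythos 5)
Your proof is correct and follows essentially the same route as the paper's: both split into the cases $t_1\geq 2$ (where $\sigma=1$) and $t_1=1$ (where $\sigma>1$), identify the chain via Proposition~\ref{lemma:codeInChain}, and apply Corollary~\ref{lemma:CH} to the first code of the chain, whose last coordinate is $t_s+\sigma-1$. The bookkeeping you describe matches the paper's computation exactly.
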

\begin{proof}
If $t_1\geq 2$, then $H^{t_1,\dots,t_s}=C_H(t_1,\dots,t_{s})[1]$ by Proposition \ref{lemma:codeInChain}. By Corollary \ref{lemma:CH}, $|C_H(t_1,\dots,t_{s})|=t_s+1=t_s+\sigma$. Otherwise, if $t_1=1$, then $H^{t_1,\dots,t_s}=C_H(t_\sigma+1,t_{\sigma+1},\dots,$ $t_{s-1},t_s+\sigma-1)[\sigma]$ by Proposition \ref{lemma:codeInChain}. Finally, $|C_H(t_\sigma+1,t_{\sigma+1},\dots,$ $t_{s-1},t_s+\sigma-1)|=t_s+\sigma-1+1=t_s+\sigma$ by Corollary \ref{lemma:CH}.
\end{proof}

\begin{example}\label{examples:chainpos}
The $\Z_{3^4}$-linear GH code $H^{1,0,2,1}$ has $t_1=1$, $\sigma=3$ and $s=4$.
By Proposition \ref{lemma:codeInChain}, since $\sigma=3$ and $s'=s-\sigma+1=2$, this code is placed in the third position of the chain
of equivalences $C_H(t'_1,t'_2)$, where $t'_1=t_3+1=2+1=3$ and $t'_2=t_4+\sigma-1=1+3-1=3$.
Therefore, $H^{1,0,2,1}=C_H(3,3)[3]$. By Corollary \ref{coro:count-CH}, $C_H(3,3)$ contains exactly $t_4+\sigma=1+3=4$ codes, which are
the ones described in Example \ref{examples:chain1}.
\end{example}

If $H^{t_1,\dots,t_s}$ is a $\Z_{3^s}$-linear GH code of length $3^t$ with $t_1\geq 2$ and $t_s=0$, then $|C_H(t_1,\dots,t_s)|=1$ by Corollary \ref{coro:count-CH}. In this case, from Tables \ref{table:Types} and \ref{table:TypesB2},
we can see that $H^{t_1,\dots,t_s}$ is not equivalent to any other code $H^{t'_1,\dots,t'_{s'}}$ of the same length $3^t$, for $t\leq 10$.
We conjecture that this is true for any $t\geq 11$. The values of $(t_1,\dots,t_s)$ for which the codes $H^{t_1,\dots,t_s}$ are not equivalent to any other such code of the same length $3^t$, for $t\leq 10$, can be found in Table \ref{table:Types2}.

\begin{table}[h]
	\centering
	\begin{tabular}{|c||l|}
		\hline
		$t=5$ & (3,0), (2,0,0)\\
		\hline
		$t=7$ & (4,0), (2,1,0), (2,0,0,0)\\
		\hline
		$t=8$ & (3,0,0)\\
		\hline
		$t=9$ & (5,0), (2,2,0), (2,0,1,0), (2,0,0,0,0)\\
		\hline
		$t=10$ & (3,1,0), (2,1,0,0)\\
		\hline
	\end{tabular}
	\caption{Type of all $\Z_{3^s}$-linear GH codes of length $3^t$ with $\sigma=1$ and $t_s=0$ for $t \leq 10$.}
	\label{table:Types2}
\end{table}

From Tables \ref{table:Types} and \ref{table:TypesB2}, we can also see that the $\Z_{3^s}$-linear GH codes
of length $3^t$ with $t\leq10$ having the same values $(r,k)$ are the ones which are equivalent by Theorem \ref{theo:equi}.
We conjecture that this is true for any $t\geq 11$.

\section{Improvement of the partial classification}
\label{sec:Classification}

In this section, we improve some results on the classification of the $\Z_{p^s}$-linear GH codes of length $p^t$, once $t$ is fixed. More precisely, we improve the upper bounds of $\cA_{t,p}$ given by Theorem \ref{theo:CardinalFirstBound}  and determine the exact value of $\cA_{t,p}$ for $t\leq 10$ by using the equivalence results established in Section \ref{sec:relations}.

Next, we prove two corollaries of Theorem \ref{theo:equi}, which allow us to improve the known upper bounds for $\cA_{t,p}$.

\begin{corollary} \label{coro:t1}
	Let $H^{t_1,\dots,t_s}$ be a $\Z_{p^s}$-linear GH code. Then, $H^{t_1,\dots,t_s}$ is equivalent to $t_s + \sigma$ $\Z_{p^{s'}}$-linear GH codes for $s'\in \{s+1-\sigma,\dots,s+t_s\}$. Among them, there is exactly one $H^{t'_1,\dots,t'_{s'}}$ with $t'_1\geq 2$, and there is exactly one $H^{t'_1,\dots,t'_{s'}}$ with $t'_{s'}=0$.
\end{corollary}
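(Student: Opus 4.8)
The plan is to derive this corollary directly from Theorem~\ref{theo:equi}, Proposition~\ref{lemma:codeInChain}, and Corollary~\ref{coro:count-CH}, which together already package all the structural information we need about chains of equivalences. First I would invoke Proposition~\ref{lemma:codeInChain}: given an arbitrary $\Z_{p^s}$-linear GH code $H^{t_1,\dots,t_s}$ (with $t_1\geq 1$), it belongs to a unique chain of equivalences $C_H(t'_1,\dots,t'_{s'})$, and moreover $H^{t_1,\dots,t_s}=C_H(t'_1,\dots,t'_{s'})[\sigma]$, where $\sigma$ is the integer such that $\ord(\mathbf{w}_2^{(s)})=p^{s+1-\sigma}$. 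By Corollary~\ref{coro:count-CH}, this chain has length $|C_H(t'_1,\dots,t'_{s'})|=t_s+\sigma$, so $H^{t_1,\dots,t_s}$ is permutation equivalent to exactly $t_s+\sigma$ codes of the form $H^{t''_1,\dots,t''_{s''}}$ (counting itself), namely the $t_s+\sigma$ entries of the chain.

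Next I would pin down the values of $s'$ (here playing the role of $s''$, the modulus exponent) that occur. By Proposition~\ref{lemma:pos-i-in-CH}(i), the $i$th code in the chain $C_H(t'_1,\dots,t'_{s'})$ has modulus exponent equal to $s'+i-1$, for $1\leq i\leq t'_{s'}+1 = |C_H|$. Since our code sits at position $\sigma$ and has modulus exponent $s$, we get $s = s'+\sigma-1$, i.e.\ $s' = s-\sigma+1$. As $i$ ranges over $\{1,\dots,t_s+\sigma\}$, the exponent $s'+i-1 = (s-\sigma+1)+i-1$ ranges exactly over $\{s-\sigma+1,\dots,s-\sigma+1+(t_s+\sigma)-1\} = \{s+1-\sigma,\dots,s+t_s\}$, which is the claimed range. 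Each value in this interval is attained by exactly one code in the chain, since the position $i$ determines $s'+i-1$ bijectively.

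Finally, the two uniqueness assertions follow from Corollary~\ref{coro:One-ts0} applied to the chain $C_H(t'_1,\dots,t'_{s'})$: part~(i) gives that the first code in the chain is the only one with $\sigma''=1$, equivalently the only one with $t''_1\geq 2$, and part~(ii) gives that the last code, in position $t'_{s'}+1=t_s+\sigma$, is the only one with $t''_{s''}=0$. I do not foresee a genuine obstacle here — the real content was already established in the chain-of-equivalences machinery of Section~\ref{sec:relations}; the only mild care needed is bookkeeping, making sure that the locally-used symbol $\sigma$ (defined via $\ord(\mathbf{w}_2^{(s)})=p^{s+1-\sigma}$ for the given code) is consistently the position index in the chain, which is exactly what Proposition~\ref{lemma:codeInChain} guarantees, and that one correctly identifies $s'=s+1-\sigma$ so that the index range translates into the stated range for $s'$.
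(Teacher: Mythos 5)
Your proposal is correct and follows essentially the same route as the paper's proof: locate the code at position $\sigma$ of its unique chain via Proposition~\ref{lemma:codeInChain}, count the chain's $t_s+\sigma$ members with Corollary~\ref{coro:count-CH}, read off the range of $s'$ from Proposition~\ref{lemma:pos-i-in-CH}(i), and obtain the two uniqueness claims from Corollary~\ref{coro:One-ts0}. No gaps; the bookkeeping identification $s'=s-\sigma+1$ for the chain head is exactly what the paper does.
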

\begin{proof}
	The code $H^{t_1,\dots,t_s}$ belongs to a chain of equivalences $C$, which can be determined by Proposition \ref{lemma:codeInChain}. We have that  $H^{t_1,\dots,t_s}$ is equivalent to any code in $C$ by Theorem \ref{theo:equi}, and the number of codes in $C$ is $t_s+\sigma$ by Corollary \ref{coro:count-CH}. By Proposition \ref{lemma:codeInChain}, the first code in $C$ has $s'=s-\sigma+1$. By Proposition \ref{lemma:pos-i-in-CH}, the $i$th code
$H^{t'_1,\dots,t'_{s'}}=C[i]$ has $s'=s-\sigma+i$ for $i\in\{1,\dots, t_s+\sigma\}$.
Therefore, $s'\in \{ s-\sigma+1, \dots, s+t_s \}$.
Finally, by Corollary \ref{coro:One-ts0}, $C[1]$ is the only code in $C$ with $t'_1\geq 2$ and $C[t_s+\sigma]$ is the only code in $C$ with $t'_{s'}=0$.	
\end{proof}

From Corollary \ref{coro:t1}, in order to determine the number of nonequivalent $\Z_{p^s}$-linear GH codes of length $p^t$,
$\cA_{t,p}$, we just have to consider one code out of the $t_s+\sigma$  codes that are equivalent. For example, we can consider the one with $t_1\geq 2$.

\begin{corollary}\label{cor:sin}
Let $H$ be a nonlinear $\Z_{p^s}$-linear GH code of length $p^t$. If
$s\in\{\lfloor(t+1)/2\rfloor+1,\dots,t+1\}$, then there is an equivalent
$\Z_{p^{s'}}$-linear Hadamard code of length $p^t$ with
$s'\in\{2,\dots,\lfloor(t+1)/2\rfloor\}$.
\end{corollary}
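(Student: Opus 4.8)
The plan is to reduce the given code to the corresponding first member of its chain of equivalences, and then to show that the relevant parameter $s'$ of that first member already lies in the desired range. Concretely, let $H = H^{t_1,\dots,t_s}$ be a nonlinear $\Z_{p^s}$-linear GH code of length $p^t$ with $s\in\{\lfloor(t+1)/2\rfloor+1,\dots,t+1\}$. By Proposition~\ref{lemma:codeInChain}, $H$ belongs to a unique chain of equivalences $C_H(t'_1,\dots,t'_{s'})$ with $t'_1\geq 2$, and $H = C_H(t'_1,\dots,t'_{s'})[\sigma]$. By Theorem~\ref{theo:equi}, $H$ is permutation equivalent to the first code in that chain, namely $H^{t'_1,\dots,t'_{s'}}=C_H(t'_1,\dots,t'_{s'})[1]$, which is a $\Z_{p^{s'}}$-linear GH code of the same length $p^t$. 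So it suffices to bound $s'$.

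First I would express $s'$ in terms of $t$ using the length relation. Recall $t = \left(\sum_{i=1}^{s'}(s'-i+1)t'_i\right)-1$ with $t'_1\geq 2$. Since every term $(s'-i+1)t'_i$ is nonnegative and the $i=1$ term contributes $s'\cdot t'_1 \geq 2s'$, we get $t+1 = \sum_{i=1}^{s'}(s'-i+1)t'_i \geq 2s'$, hence $s'\leq \lfloor(t+1)/2\rfloor$. So the first code in the chain automatically satisfies $s'\leq\lfloor(t+1)/2\rfloor$. For the lower bound, I would argue that $s'\geq 2$: if $s'=1$ the code $H^{t'_1}$ would be the $\Z_p$-linear (hence linear) repetition-type GH code, but $H$ is assumed nonlinear and equivalence preserves linearity, contradiction; alternatively, invoke the standing reduction stated in Section~\ref{Sec:construction} that for nonlinear codes one may restrict to $2\leq s\leq t-2$. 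Thus $s'\in\{2,\dots,\lfloor(t+1)/2\rfloor\}$, and $H$ is equivalent to the $\Z_{p^{s'}}$-linear GH code $H^{t'_1,\dots,t'_{s'}}$ with $s'$ in the stated range.

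I would present this cleanly as: (1) place $H$ in its chain via Proposition~\ref{lemma:codeInChain}; (2) pass to the first member of the chain via Theorem~\ref{theo:equi}, noting it has $t'_1\geq 2$ and the same length; (3) derive $2s'\leq t+1$ from the length equation because of the $t'_1\geq 2$ term, giving $s'\leq\lfloor(t+1)/2\rfloor$; (4) rule out $s'=1$ using nonlinearity. The main obstacle — really the only subtle point — is step (3): one must be careful that the inequality $\sum_{i=1}^{s'}(s'-i+1)t'_i \geq 2s'$ genuinely uses $t'_1\geq 2$ (the coefficient of $t'_1$ is exactly $s'$) and that all other summands are $\geq 0$, so that no hidden cancellation occurs; this is immediate here since all quantities are nonnegative integers. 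One should also double-check the edge case where $H$ itself already has $t_1\geq 2$ (so $\sigma=1$ and $H$ is itself the first member), in which case the same length inequality applied directly to $(t_1,\dots,t_s)$ would force $s\leq\lfloor(t+1)/2\rfloor$, contradicting the hypothesis $s\geq\lfloor(t+1)/2\rfloor+1$; hence necessarily $\sigma>1$, the chain is genuinely nontrivial, and the reduction produces a strictly smaller $s'=s-\sigma+1$, consistent with everything above.
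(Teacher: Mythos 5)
Your proposal is correct and follows essentially the same route as the paper: both pass to the first member of the chain of equivalences (the representative with $t'_1\geq 2$, obtained via Proposition~\ref{lemma:codeInChain} and Theorem~\ref{theo:equi}) and then bound $s'$ from the length equation $t+1=\sum_{i=1}^{s'}(s'-i+1)t'_i\geq s't'_1\geq 2s'$, which is exactly the paper's inequality $(s-\sigma+1)(t_\sigma+1)\leq t+1$ with $t_\sigma\geq 1$ in different notation. Your explicit handling of the lower bound $s'\geq 2$ via nonlinearity, and your check that $\sigma>1$ under the hypothesis on $s$, are minor additions the paper leaves implicit.
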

\begin{proof}
Let $H^{t_1,\dots,t_s}$ be a $\Z_{p^s}$-linear Hadamard code with $s\in\{\lfloor(t+1)/2\rfloor+1,\dots,t+1\}$. Since $\sum_{i=1}^s(s+1-i)t_i=t+1$, then $t_1=1$ and we have that $\sigma>1$. Therefore, by Proposition \ref{lemma:codeInChain}, $H^{t_1,\dots,t_s}$ is permutation equivalent to the $\Z_{p^{s-\sigma+1}}$-linear Hadamard code $H=H^{t_\sigma+1,t_{\sigma+1},\dots,t_{s-1},t_s+\sigma-1}$.

Now, we just need to see that $s-\sigma+1<\lfloor(t+1)/2\rfloor$. Since the length of $H$ is $p^t$, we have that $t+1=(s-\sigma+1)(t_\sigma+1)+\sum_{i=2}^{s-\sigma+1}(s-\sigma+2-i)t_{\sigma-1+i}+\sigma-1$. Therefore, $(s-\sigma+1)(t_\sigma+1)\leq t+1$ and $s-\sigma+1\leq({t+1})/({t_\sigma+1})$. By the definition of $t_\sigma$, we know that $t_\sigma\geq1$, so $s-\sigma+1\leq \lfloor (t+1)/2\rfloor$.
\end{proof}

Note that we can focus on the $\Z_{p^s}$-linear GH codes of length $p^t$ with $s\in \{2,\dots,\lfloor (t+1)/2\rfloor\}$ by Corollary \ref{cor:sin}, and  we can restrict ourselves to the codes having $t_1\geq 2$ by Corollary \ref{coro:t1}.
With this on mind, in order to classify all such codes for a given $t\geq 3$, we define $\tilde{X}_{t,s,p}=|\{ (t_1,\ldots,t_s)\in \N^s :  t+1=\sum_{i=1}^{s}(s-i+1) t_i, \ t_1\geq 2 \}|$ for $s\in\{2,\dots,\lfloor (t+1)/2\rfloor\}$. 

\begin{theorem} \label{teo:numNonEquiv100-p}
Let $\cA_{t,p}$ be the number of nonequivalent $\Z_{p^s}$-linear GH codes of length $p^t$ with $t\geq 3$ and $p\geq 3$ prime. Then,
\begin{equation}\label{eq:BoundAtNewX-p}
\cA_{t,p} \leq1+ \sum_{s=2}^{\lfloor\frac{t+1}{2}\rfloor} \tilde{X}_{t,s,p}
\end{equation}
and
\begin{equation}\label{eq:BoundAtNew-p}
\cA_{t,p} \leq1+ \sum_{s=2}^{\lfloor\frac{t+1}{2}\rfloor} ({\cA}_{t,s,p}-1).
\end{equation}
Moreover, for $p\geq 3$ prime and  $3 \leq t \leq 10$, the upper bound (\ref{eq:BoundAtNewX-p}) is tight.
\end{theorem}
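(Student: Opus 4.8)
The plan is to derive the two upper bounds from the equivalence results of Section~\ref{sec:relations} together with the known bound in Theorem~\ref{theo:CardinalFirstBound}, and then verify tightness for $3\le t\le 10$ against the data in Tables~\ref{table:Types} and~\ref{table:TypesB2}. First I would recall that every $\Z_{p^s}$-linear GH code of length $p^t$ is of the form $H^{t_1,\dots,t_s}$ with $\sum_{i=1}^s(s-i+1)t_i=t+1$ and $t_1\ge 1$, and that there is exactly one linear such code for each admissible $s$ (namely $H^{1,0,\dots,0,t+1-s}$); all these linear codes are mutually equivalent, contributing a single ``$1+$'' term. For the nonlinear codes, Corollary~\ref{cor:sin} shows that any nonlinear $\Z_{p^s}$-linear GH code with $s>\lfloor(t+1)/2\rfloor$ is permutation equivalent to one with $s'\in\{2,\dots,\lfloor(t+1)/2\rfloor\}$, so it suffices to count representatives with $s$ in that range. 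Within that range, Corollary~\ref{coro:t1} shows that each chain of equivalences contains exactly one code $H^{t'_1,\dots,t'_{s'}}$ with $t'_1\ge 2$; hence choosing this representative, the number of nonequivalent nonlinear codes is at most the number of tuples $(t_1,\dots,t_s)$ with $t_1\ge 2$, $s\in\{2,\dots,\lfloor(t+1)/2\rfloor\}$, and $\sum_{i=1}^s(s-i+1)t_i=t+1$, i.e.\ $\sum_{s=2}^{\lfloor(t+1)/2\rfloor}\tilde X_{t,s,p}$. Adding the linear code gives~\eqref{eq:BoundAtNewX-p}.

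For~\eqref{eq:BoundAtNew-p}, I would argue the same way but be more economical: for each fixed $s\in\{2,\dots,\lfloor(t+1)/2\rfloor\}$ there are, by definition, $\cA_{t,s,p}$ nonequivalent $\Z_{p^s}$-linear GH codes of length $p^t$, exactly one of which is linear; so at most $\cA_{t,s,p}-1$ of them are nonlinear. Every nonlinear $\Z_{p^s}$-linear GH code with arbitrary $s$ is equivalent to one with $s$ in this range by Corollary~\ref{cor:sin}, so the total count of nonequivalent nonlinear codes is at most $\sum_{s=2}^{\lfloor(t+1)/2\rfloor}(\cA_{t,s,p}-1)$, and adding the linear code yields~\eqref{eq:BoundAtNew-p}. (One should note that this counts each nonlinear code at least once across the sum, which is all that is needed for an upper bound.)

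For the tightness claim when $3\le t\le 10$, the plan is to exhibit, for each such $t$, the exact number of nonequivalent codes and check it equals $1+\sum_{s=2}^{\lfloor(t+1)/2\rfloor}\tilde X_{t,s,p}$. Here I would invoke the remark already made at the end of Section~\ref{sec:relations}: for $t\le 10$, two $\Z_{p^s}$-linear GH codes of length $p^t$ are permutation equivalent if and only if they share the same pair $(r,k)$ of rank and kernel dimension, and these pairs are displayed in Tables~\ref{table:Types} and~\ref{table:TypesB2}. Thus it remains to count, for each $t$, the distinct pairs $(r,k)$ appearing among the representatives with $t_1\ge 2$ and $s\le\lfloor(t+1)/2\rfloor$ (plus one for the linear code), and to check by a direct combinatorial enumeration that this number equals $1+\sum_{s=2}^{\lfloor(t+1)/2\rfloor}\tilde X_{t,s,p}$; since each chain with $t_1\ge 2$ contributes a unique $(r,k)$ and distinct such chains give distinct pairs, the counts coincide.

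The main obstacle is the tightness direction: the upper bound argument is a clean bookkeeping exercise built on Corollaries~\ref{coro:t1} and~\ref{cor:sin}, but establishing that the bound~\eqref{eq:BoundAtNewX-p} is actually attained requires knowing that no two codes from \emph{different} chains of equivalences are themselves equivalent. This does not follow from the equivalence machinery of Section~\ref{sec:relations} alone; it relies on the computational fact (recorded in the tables) that, for $t\le 10$, codes in distinct chains have distinct invariants $(r,k)$, so they are provably nonequivalent. For $t\ge 11$ this is only conjectured, which is precisely why the tightness statement is limited to $t\le 10$.
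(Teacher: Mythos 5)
Your proposal is correct and follows essentially the same route as the paper: both bounds are obtained from Corollaries \ref{coro:t1} and \ref{cor:sin} by restricting to representatives with $s\le\lfloor(t+1)/2\rfloor$ and $t_1\ge 2$ (noting the linear codes $H^{1,0,\dots,0,t+1-s}$ are excluded from $\tilde X_{t,s,p}$ and contribute the single ``$1+$''), and tightness for $3\le t\le 10$ is established exactly as in the paper, by observing that the lower bound coming from the invariants $(r,k)$ in the tables coincides with the upper bound \eqref{eq:BoundAtNewX-p}. Your closing remark correctly isolates why the tightness claim is limited to $t\le 10$: nonequivalence of distinct chains rests on the computed $(r,k)$ data, not on the equivalence machinery itself.
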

\begin{proof}
It is proven that the codes $H^{1,0,\dots,0,t_s}$ with $s\geq2$ and $t_s\geq0$, are the only $\Z_{p^s}$-linear GH codes which are linear. Note that they are not included in the definition of $\tilde{X}_{t,s,p}$ for $s\in \{2,\dots,\lfloor (t+1)/2\rfloor\}$. 
Therefore, the new upper bounds (\ref{eq:BoundAtNewX-p}) and (\ref{eq:BoundAtNew-p}) follow by Corollaries \ref{coro:t1} and \ref{cor:sin}, after adding $1$ to take into account the linear code.

In Table \ref{table:ClassificationEquiv}, for any odd prime $p$ and $3\leq t \leq 10$, these new upper bounds together with previous bounds
are shown. Note that the lower bound $(r,k)$ coincides with the upper bound (\ref{eq:BoundAtNewX-p}) for $t\leq10$, so this upper bound is tight for $t\leq10$.
\end{proof}


\begin{table}[h]
\centering
\begin{tabular}{|c||c|c|c|c|c|c|c|c|}
\cline{0-8}
$t$ & 3 & 4 & 5 & 6 & 7 & 8 & 9 & 10 \\ \cline{0-8}
previous lower bound $(r,k)$ & {\bf 2}  & {\bf 2}  & {\bf 4}  & {\bf 4} & {\bf 7} & {\bf 8} & {\bf 12} & {\bf 14} \\\cline{0-8}
new upper bound (\ref{eq:BoundAtNewX-p}) & {\bf 2}  & {\bf 2}  & {\bf 4}  & {\bf 4} & {\bf 7} & {\bf 8} & {\bf 12} & {\bf 14}  \\\cline{0-8}
new upper bound (\ref{eq:BoundAtNew-p})  & 2  & 2  & 5  & 6 & 11 & 15 & 26 & 33    \\\cline{0-8}
previous upper bound  & 2  & 2  & 6  & 9 & 15 & 22 & 33 & 46  \\\cline{0-8}
\end{tabular}
\caption{Bounds for the number $\cA_{t,p}$ of nonequivalent $\Z_{p^s}$-linear GH codes of length $p^t$ for $3\leq t \leq 10$.}
\label{table:ClassificationEquiv}
\end{table}

This last result improves the partial classification. Actually, by definition, we have that $\tilde{X}_{t,s,p}\leq X_{t,s,p}-1$, so the upper bound (\ref{eq:BoundAtNewX-p})
is clearly better than (\ref{eq:CardinalFirstBound}). It is also clear that the upper bound (\ref{eq:BoundAtNew-p}) is better than (\ref{eq:CardinalFirstBoundX-p}) since
there are fewer addends. Therefore, both new upper bounds improve the previous known upper bounds.
Recall that $\cA_{t,s,p}=X_{t,s,p}$ for any $3\leq t\leq 10$ and $2\leq s\leq t-2$. If this equation is also true for any $t\geq 11$, then upper bounds $(\ref{eq:CardinalFirstBound-p})$ and $(\ref{eq:CardinalFirstBoundX-p})$
coincide. Moreover, upper bound (\ref{eq:BoundAtNewX-p}) would always be better than (\ref{eq:BoundAtNew-p}) since $\tilde{X}_{t,s,p}\leq X_{t,s,p}-1 = \cA_{t,s,p}-1$.

\end{document}